\def \M {{\mathcal{M}}}
\def \S {{\mathcal{S}}}
\def \O {{\mathcal{O}}}
\newtheorem{theorem}{Theorem}
\newtheorem{claim}{Claim}
\newtheorem*{lemmanonumber}{Lemma}
\newtheorem{definition}{Definition}
\newtheorem{lemma}{Lemma}
\newtheorem{corollary}{Corollary}[theorem]
\def\BState{\State\hskip-\ALG@thistlm}
\begin{document}

\title{Online and Offline Greedy Algorithms for Routing with Switching Costs}

\author{
  Roy Schwartz \\
  Technion –- Israel Institute of Technology\\
  \texttt{schwartz@cs.technion.ac.il}
  \and
 Mohit Singh \\
 Georgia Institute of Technology\\
  \texttt{mohit.singh@isye.gatech.edu}
  \and
  Sina Yazdanbod\\
  Georgia Institute of Technology\\
  \texttt{syazdanbod@gatech.edu}
}

\maketitle

\begin{abstract}
Motivated by the use of high speed circuit switches in large scale data centers, we consider the problem of {\em circuit switch scheduling}.
In this problem we are given demands between pairs of servers and the goal is to schedule at every time step a matching between the servers while maximizing the total satisfied demand over time.
The crux of this scheduling problem is that once one shifts from one matching to a different one a fixed delay $\delta$ is incurred during which no data can be transmitted.

For the offline version of the problem we present a $(1-\nicefrac[]{1}{e}-\epsilon)$ approximation ratio (for any constant $\epsilon >0$).
Since the natural linear programming relaxation for the problem has an unbounded integrality gap, we adopt a hybrid approach that combines the combinatorial greedy with randomized rounding of a different suitable linear program.
For the online version of the problem we present a (bi-criteria) $ ((e-1)/(2e-1)-\epsilon)$-competitive ratio (for any constant $\epsilon >0$ ) that exceeds time by an additive factor of $O(\nicefrac[]{\delta}{\epsilon})$.
We note that no uni-criteria online algorithm is possible.
Surprisingly, we obtain the result by reducing the online version to the offline one.

\end{abstract}

\section{Introduction}
In recent years the vast scaling up of data centers is fueled by applications such as cloud computing and large-scale data analytics.
Such computational tasks, which are performed in a data center, are distributed in nature and are spread over thousands of servers.
Thus, it is no surprise that designing better and efficient switching algorithms is a key ingredient in obtaining better use of networking resources.
Recently, several works have focused on high speed optical circuit switches that have moving optical mirrors \cite{CSSRXZWC14,tech:optical2,tech:optical3} or wireless circuits \cite{tech:wireless2,tech:wireless1,tech:wireless3}.


A common feature of many of these new switching models is that at any time the data can be transmitted on any matching between the senders and the receivers.
However, once the switching algorithm decides to reconfigure from the current matching to a new different matching, due to physical limitations such as the time it takes to rotate mirrors, a fixed delay is incurred before data can be sent along the new reconfigured matching.
This has led to significant study on obtaining good scheduling algorithms that take this delay into account~\cite{indirect:1,lance,bojjacostly}.
The cost in switching between matchings makes the problem different when compared to the classical literature on scheduling in crossbar switching~\cite{chang1999service}, which are usually based on Birkhoff von-Neumann decompositions.
In this paper we focus on finding the schedule that sends as much data as possible in a fixed time window.
We aim to design simple and efficient offline and online algorithms, with provable guarantees, for the scheduling problem that incorporates switching delays.

In the circuit switch scheduling problem, we are given a traffic demand matrix $D \in \mathbb{R}^{|A|\times |B|}_{+}$, where $A$ is the set of senders and $B$ is the set of receivers. $D_{ij}$ denotes the amount of data that needs to be sent from sender $i$ to receiver $j$. The $D_{ij}$'s can also be seen as weights on the edges of a complete bipartite graph with vertex set $A\cup B$. We are also given a time window $W$ and a switching time $\delta >0$. At any time, the algorithm must pick a matching $M$ and duration $\alpha$ for which the data is transmitted along the edges of the matching $M$ that still require data to be sent. When the algorithm changes to another matching $M'$ for another duration $\alpha'$, the algorithm must account for $\delta$ amount of time for switching between the two matchings. The total amount of time that data is sent along matchings as well as switching time between the matchings must total no more than $W$.
The objective is to maximize the total demand that is satisfied.

\subsection{Our Results and Contributions}
Our main contribution in this paper are simple and efficient algorithms for the offline and online variants of the circuit switch scheduling problem.
The following theorem summarizes our result for the offline setting.
\begin{theorem}\label{thm:offline}
Given any constant $\epsilon>0$, there is a polynomial time algorithm that returns a $(1-\nicefrac[]{1}{e}-\epsilon)$-approximation for the circuit switch scheduling problem.
\end{theorem}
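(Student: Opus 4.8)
The plan is to recast the offline problem as maximization of a monotone submodular function under a knapsack-type budget constraint, and then run a (suitable) greedy, using a linear program only inside the greedy step.

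\textbf{Step 1: submodular reformulation.} First I would argue that an optimal schedule may be assumed to use every matching at most once: two active intervals carrying the same matching can be merged into one, which only saves a switching delay. Hence a schedule is a set $\{(M_1,\alpha_1),\dots,(M_k,\alpha_k)\}$ of (matching, duration) pairs with $\sum_t \alpha_t + (k-1)\delta \le W$, and its value is $f(\{(M_t,\alpha_t)\}) = \sum_{(i,j)} \min\bigl(D_{ij},\ \sum_{t:\,(i,j)\in M_t}\alpha_t\bigr)$. Since each map $\min(D_{ij},\cdot)$ is nondecreasing and concave and is applied to a nonnegative combination of the chosen durations, $f$ is monotone and submodular on the ground set of all (matching, duration) pairs, where the pair $(M,\alpha)$ consumes $\alpha+\delta$ of the budget $W$ (the one extra $\delta$ being recovered via an additive-$\epsilon$ slack, or absorbed in the analysis). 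Thus, up to lower-order terms, the offline problem is monotone submodular maximization subject to a single knapsack constraint, for which one expects a $1-\nicefrac{1}{e}$ bound.

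\textbf{Step 2: making the ground set tractable.} The ground set is infinite since $\alpha$ is continuous and there are exponentially many matchings. For the durations I would show it suffices to consider $\alpha$ in a polynomial-size candidate set (partial sums of the entries of $D$, or, after a $(1-\epsilon)$-loss rounding, multiples of a small $\eta$ together with a guess of the number $k$ of matchings used by $\mathrm{OPT}$, which is at most $\min\{W/\delta,\ \mathrm{poly}\}$ once repeated matchings are excluded). The exponentially many matchings are never enumerated; they are handled inside the greedy step.

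\textbf{Step 3: greedy with an LP-based step.} I would then run the density greedy: from the empty schedule with residual demands $D'$, repeatedly add the pair $(M,\alpha)$ maximizing the density $(\alpha+\delta)^{-1}\sum_{(i,j)\in M}\min(D'_{ij},\alpha)$ among pairs that still fit, updating $D'$; to push the ratio from $\nicefrac{1}{2}$ to $1-\nicefrac{1}{e}-\epsilon$ one augments this with the usual partial-enumeration / continuous-greedy refinement. For a fixed $\alpha$, the inner problem $\max_M \sum_{(i,j)\in M}\min(D'_{ij},\alpha)$ is a maximum-weight bipartite matching, but to keep the guarantee from degrading I would solve the linear-programming relaxation of the step over the bipartite matching (Birkhoff--von~Neumann) polytope and then sample a matching from a Birkhoff--von~Neumann decomposition of the fractional optimum. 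Because a point of the matching polytope is an exact convex combination of matchings, this randomized rounding is lossless in expectation — the expected coverage of each edge equals the value the LP assigns it — so the rounded step is at least as good as the best integral step. It is essential that rounding is applied only to this restricted per-step LP: the natural global LP for the whole schedule has unbounded integrality gap, since a fractional solution can smear tiny durations over many matchings and pay almost no switching cost.

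\textbf{Main obstacle.} The crux is Step 1 together with the per-step rounding: rigorously establishing that $f$ is submodular despite the demand caps and the budget shared across all matchings, and showing the randomized rounding inside each greedy step loses nothing in expectation (and then with high probability, via concentration or derandomization), so that the total loss is only the unavoidable $\nicefrac{1}{e}$ plus an arbitrarily small $\epsilon$ from discretization and enumeration. The secondary technical point is bounding the number of distinct matchings $\mathrm{OPT}$ needs, so that the guessing in Step 2 stays polynomial even when $W/\delta$ is large.
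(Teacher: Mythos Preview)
Your proposal correctly identifies the submodular structure and the fact that each greedy step reduces to maximum-weight bipartite matching, but it has a genuine gap at the point where you invoke ``the usual partial-enumeration / continuous-greedy refinement'' to lift the density greedy from a sub-$(1-\nicefrac{1}{e})$ guarantee to $1-\nicefrac{1}{e}-\epsilon$. Sviridenko's partial enumeration requires iterating over all triples of ground-set elements, and the continuous-greedy approaches require sampling or gradient evaluations over the ground set; here the ground set contains all bipartite matchings, so neither is polynomial. You flag this yourself (``bounding the number of distinct matchings $\mathrm{OPT}$ needs'') but do not resolve it: $\mathrm{OPT}$ may use $\Theta(W/\delta)$ distinct matchings, and $W/\delta$ is not polynomially bounded in the input. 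Your per-step Birkhoff--von~Neumann rounding is also a red herring: the bipartite matching polytope is integral, so the inner LP already returns an integral matching and nothing is gained; the difficulty is not inside a greedy step but in getting the full $1-\nicefrac{1}{e}$ across steps.

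The paper closes this gap by a case split that you do not have. When $\delta \le c\,\epsilon W$, a window-shrinking lemma shows that the optimum on a window of length $W-\delta$ is at least $(1-\nicefrac{2\delta}{W})$ times the optimum on $W$; against this shrunk optimum the plain density greedy (run until the budget is first exceeded, then truncating the last configuration) already attains the full $1-\nicefrac{1}{e}$, because the truncated last step still covers the budget $W-\delta$. No enumeration is needed in this regime. When $\delta > c\,\epsilon W$, at most $k=O(1/\epsilon)$ configurations fit, so one can afford to enumerate all duration profiles $(\alpha_1,\dots,\alpha_k)$ to accuracy $\epsilon$ in time $\mathrm{poly}(n/\epsilon^k)$; for a fixed profile the paper writes a different LP that selects one matching per slot, solves it via a max-weight-matching separation oracle on the dual, and rounds by independently sampling a matching for each slot. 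The $1-\nicefrac{1}{e}$ loss comes from this independent rounding across slots (via a per-edge concentration bound for $\min(D_e,\sum_i \alpha_i^* Y_{e,i})$), not from Birkhoff--von~Neumann within a single slot. Thus both the place where the LP is used and the source of the $\nicefrac{1}{e}$ loss differ from your outline.
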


We note that one can formulate two natural linear programming relaxation to the circuit switch scheduling problem.
The first assigns a distribution over matchings for every time, whereas the second picks configurations with the additional knapsack constraint.
Unfortunately, both have an unbounded integrality gap.
Thus, a different approach must be used.

We adopt a hybrid approach that combines greedy and rounding of a special linear program to prove the above theorem.
The former approach is employed when the switching delay $\delta$ is significantly smaller than the time window $W$, while the latter approach is employed otherwise.
It was already noted \cite{bojjacostly} that the offline variant of the circuit switch scheduling problem is a special case of maximizing a monotone submodular function given a knapsack constraint.
Unfortunately, the above reduction requires a ground set of infinite size where each element in the ground set corresponds to a matching $M$ and a duration $\alpha$.
We note that even if the durations are discretized we are still left with a ground set of exponential size that contains all matchings of the bipartite graph.
Hence, the standard tight $(1-\nicefrac[]{1}{e})$-approximation~\cite{sviridenko2004note} for maximizing a monotone submodular function given a knapsack constraint cannot be applied.
Our main technical ingredient is to show that despite the above difficulties, the hybrid approach we propose in the paper allows one to obtain the nearly optimal $(1-\nicefrac[]{1}{e})$-approximation for the problem.

%

We also consider the online variant of the problem where the data matrix is not known in advance but is revealed over time. We consider a discrete time process where at each time step, we receive a new additional data matrix that needs to be transmitted. Moreover, we can choose a matching to transmit data at any time step with the constraint that whenever we change the matching from the previous step, no data is transmitted for $\delta$ steps.
{Our main contribution is a reduction from the online variant to the offline variant.
To the best of our knowledge, such reductions with a minor loss in the guarantee are seldomly found.
}
This results in a bi-criteria algorithm since the online algorithm is allowed a slightly larger time window than the optimum.
We remark that such a bi-criteria approximation is necessary and we refer the reader to Appendix \ref{apx:BicriteriaOnline} for details.
The following theorem summarizes the above.
%
\begin{theorem}\label{thm:online}
Given a $\beta$-approximation for the offline circuit switch scheduling problem and an integer $k\geq 3$, there exists an algorithm achieving a competitive ratio of  $\left(1 - \nicefrac[]{2}{k}\right) \frac{\beta}{1 + \left( 1 - \nicefrac[]{2}{k} \right)\beta}$ for the online circuit switch scheduling problem which uses a time window of $W+k\delta$ as compared to a time window of $W$ for the optimum.
\end{theorem}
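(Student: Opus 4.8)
The plan is to drive the online algorithm by a small number of calls to the given offline $\beta$-approximation, and to analyze it against a fixed optimal offline schedule sliced into short pieces. Concretely, I would partition the enlarged window $[0,W+k\delta]$ into $k$ consecutive phases, each of length $\ell:=\tfrac{W+k\delta}{k}=\tfrac{W}{k}+\delta$. During the first phase the algorithm transmits nothing and only records the demand matrices that arrive. At the start of each later phase $j\in\{2,\dots,k\}$ it forms its current \emph{backlog} $D^{(j)}$ — all demand revealed through the end of phase $j-1$, minus whatever it has already transmitted — runs the offline $\beta$-approximation on the instance $D^{(j)}$ with time window $\tfrac{W}{k}=\ell-\delta$ to get a schedule $\Sigma_j$, spends the first $\delta$ of the phase switching to the first matching of $\Sigma_j$, and executes $\Sigma_j$ during the remaining $\tfrac{W}{k}$ of the phase. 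Each phase lasts $\ell$, so the total time used is exactly $W+k\delta$, as required.

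For the analysis I would fix an optimal offline schedule $\Sigma^\star$ for window $W$, of value $\mathrm{OPT}$, and (splitting demand units if necessary) assume each unit it transmits is transmitted ``in one place''. Cutting $[0,W]$ at $\tfrac{W}{k},\tfrac{2W}{k},\dots$ breaks $\Sigma^\star$ into $k$ chunks $\Sigma^\star_1,\dots,\Sigma^\star_k$ of values $v_1,\dots,v_k$ with $\sum_c v_c=\mathrm{OPT}$; each $\Sigma^\star_c$ is a feasible window-$\tfrac{W}{k}$ schedule and the chunks use pairwise-disjoint demand. Two structural facts make the comparison go through: (i) all demand used by $\Sigma^\star_c$ has arrived by real time $c\cdot\tfrac{W}{k}\le c\ell$, hence lies in the backlog at the start of phase $c+1$; and (ii) therefore, for each $c\le k-1$, deleting from $\Sigma^\star_c$ the demand already transmitted before phase $c+1$ yields a valid candidate for the offline call made in phase $c+1$. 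Since that call is a $\beta$-approximation, the value $a_{c+1}$ gained in phase $c+1$ satisfies $a_{c+1}\ge\beta\bigl(v_c-x_c\bigr)$, where $x_c$ is the amount of $\Sigma^\star_c$'s demand the algorithm had already transmitted. Summing over $c=1,\dots,k-1$ and using chunk-disjointness (so that $\sum_c x_c\le\mathrm{ALG}$, the algorithm's total value) gives $\mathrm{ALG}\ge\beta\bigl(\mathrm{OPT}-v_k-\mathrm{ALG}\bigr)$. Combining this with a direct bound for the case in which $v_k$ (the unmatched tail chunk) is large — there one argues that this value must come from demand revealed late, which $\Sigma^\star$ itself can transmit only within a correspondingly short window, so that the final phase already recovers a $\beta$-fraction of it — and charging the observation phase and the $\delta$-overheads against an additive $\tfrac{2}{k}$ fraction of $\mathrm{OPT}$, one obtains a bound of the form $\mathrm{ALG}\ge\bigl(1-\tfrac{2}{k}\bigr)\beta\bigl(\mathrm{OPT}-\mathrm{ALG}\bigr)$, which rearranges to the claimed competitive ratio.

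I expect the delicate part — the main obstacle — to be the bookkeeping in the telescoped inequality. The per-phase exchange bound must subtract only the portion of $\Sigma^\star_c$'s demand that has \emph{already} been transmitted, and it is precisely the disjointness of the chunks that makes these corrections collapse into a single $-\mathrm{ALG}$ term rather than a $-k\cdot\mathrm{ALG}$ term; getting this wrong would degrade the ratio by a factor of $k$. I also have to align the switching costs so that $k\delta$ of extra window suffices: each $\Sigma^\star_c$ occupies a window of length $\tfrac{W}{k}$ while the corresponding phase has length $\tfrac{W}{k}+\delta$, exactly $\delta$ of which is spent switching in. Finally, quantifying the ``lost'' mass as a $\tfrac{2}{k}$ fraction of $\mathrm{OPT}$ will require care when the optimal schedule concentrates its value near the end of the window, and it seems natural to split that step according to whether $\delta$ is small or large relative to $W/k^2$, mirroring the hybrid structure used in the offline algorithm.
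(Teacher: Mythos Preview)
Your algorithm is not the paper's, and the discrepancy is not cosmetic: you divide the enlarged window into exactly $k$ phases of length $W/k+\delta$, whereas the paper divides time into blocks of length $k\delta$ (so there are roughly $W/(k\delta)$ of them, possibly many) and proves the bound by induction on the number of blocks, invoking Lemma~\ref{lem:reduce} on a single block of length $k\delta$ to produce the $(1-\tfrac{2}{k})$ factor. The inductive step compares the optimum on the full input to the optimum on the input with the first block removed, via the inequality $f(\O')\ge f(\O)-f(O_1)-f(S_1)$, and then solves a one-parameter recurrence for $\gamma$.

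The gap in your proposal is the treatment of the last chunk $v_k$, and it is a gap in the \emph{algorithm}, not just the analysis. Your claim that ``this value must come from demand revealed late'' is false: $v_k$ is whatever $\Sigma^\star$ transmits in $[(k-1)W/k,W]$, and that demand may have arrived at time $0$. Worse, in the regime $(k-1)\delta< W/k$ (equivalently $W>k(k-1)\delta$, which is exactly when one wants $k$ moderate and the $(1-\tfrac2k)$ loss small), your scheme can miss late demand entirely. Take an instance where the only demand is a single perfect matching of total weight $n$ that arrives at time $W$. The clairvoyant optimum switches to this matching in advance and transmits all $n$ units at step $W$, so $\mathrm{OPT}=n$. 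In your algorithm, phase $k$ begins at time $(k-1)(W/k+\delta)<W$, so the backlog $D^{(k)}$ fed to the final offline call is empty, as are all earlier backlogs; hence $\mathrm{ALG}=0$. No bookkeeping can repair this, and splitting into cases by the size of $\delta$ does not help, because the failure occurs precisely when $\delta$ is small.

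The paper's choice of block length $k\delta$ is what makes the reduction work: every piece of demand is incorporated into an offline call at most $k\delta$ time after it arrives, so nothing is ever stranded. The factor $(1-\tfrac2k)$ then arises not from discarding a tail chunk of $\Sigma^\star$ but from Lemma~\ref{lem:reduce} applied with $W=k\delta$, which says a schedule on a $k\delta$ window can be shortened by $\delta$ while losing at most a $\tfrac{2}{k}$ fraction of its value.
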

Combining Theorem~\ref{thm:offline} and Theorem~\ref{thm:online}, we have the following corollary.
\begin{corollary}
For any constant $\epsilon>0$, there exists an algorithm achieving a competitive ratio of $\left(\frac{e-1}{2e-1}-\epsilon\right)$ for the online circuit switch scheduling problem which uses a time window of $W+O\left(\nicefrac[]{\delta}{\epsilon}\right)$ as compared to a time window of $W$ for the optimum.
\end{corollary}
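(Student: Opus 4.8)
The plan is to instantiate Theorem~\ref{thm:online} with the offline algorithm supplied by Theorem~\ref{thm:offline} and with a suitably chosen constant $k$, and then to verify that the two sources of loss --- the gap between the offline ratio $\beta$ and $1-\nicefrac[]{1}{e}$, and the $\nicefrac[]{2}{k}$ term coming from enlarging the time window --- each enter the final competitive ratio only linearly. Since $\epsilon$ is a constant, $k$ will also be a constant, so the resulting algorithm stays polynomial time.

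Concretely, I would first apply Theorem~\ref{thm:offline} with error parameter $\nicefrac[]{\epsilon}{2}$ to obtain a polynomial-time $\beta$-approximation for the offline problem with $\beta = 1 - \nicefrac[]{1}{e} - \nicefrac[]{\epsilon}{2}$. Next I would set $k := \ceil{\nicefrac[]{4}{\epsilon}}$; we may assume $\epsilon$ is below a fixed absolute constant (otherwise the claimed ratio $\nicefrac[]{(e-1)}{(2e-1)} - \epsilon$ is nonpositive and there is nothing to prove), which guarantees $k \geq 3$ as required by Theorem~\ref{thm:online}. Feeding this $\beta$ and $k$ into Theorem~\ref{thm:online} yields an online algorithm that uses a time window of $W + k\delta = W + \ceil{\nicefrac[]{4}{\epsilon}}\,\delta = W + O(\nicefrac[]{\delta}{\epsilon})$ and has competitive ratio $\left(1-\nicefrac[]{2}{k}\right)\tfrac{\beta}{1+(1-\nicefrac[]{2}{k})\beta}$.

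It then remains to lower bound this expression. Writing $x := (1-\nicefrac[]{2}{k})\beta$, the ratio equals $\tfrac{x}{1+x}$, and since the map $t \mapsto \tfrac{t}{1+t} = 1 - \tfrac{1}{1+t}$ is increasing and $1$-Lipschitz on $[0,\infty)$ (its derivative is $\tfrac{1}{(1+t)^2}\leq 1$), it suffices to show $x \geq 1 - \nicefrac[]{1}{e} - \epsilon$: this forces the ratio to be at least $\tfrac{1-1/e-\epsilon}{2-1/e-\epsilon} \geq \tfrac{1-1/e}{2-1/e} - \epsilon = \tfrac{e-1}{2e-1} - \epsilon$. And indeed $x = \beta - \nicefrac[]{2}{k}\cdot\beta \geq \beta - \nicefrac[]{2}{k} \geq \left(1 - \nicefrac[]{1}{e} - \nicefrac[]{\epsilon}{2}\right) - \nicefrac[]{\epsilon}{2} = 1 - \nicefrac[]{1}{e} - \epsilon$ by the choice of $k$, completing the argument.

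The computations here are routine; the only point that needs (mild) care is confirming that the degradation composes \emph{additively} rather than multiplicatively --- that passing from the ideal parameters $\beta = 1-\nicefrac[]{1}{e}$, $k = \infty$ to the achievable ones costs only $O(\epsilon)$ in the competitive ratio. This is exactly what the $1$-Lipschitz bound on $\tfrac{t}{1+t}$ provides, together with the observation that $k$ influences the ratio only through the benign factor $1-\nicefrac[]{2}{k}$ (and otherwise only through the $O(\nicefrac[]{\delta}{\epsilon})$ additive blow-up in the window).
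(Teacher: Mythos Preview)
Your proposal is correct and follows exactly the approach the paper intends: it merely states that the corollary follows by ``Combining Theorem~\ref{thm:offline} and Theorem~\ref{thm:online}'', and you have supplied the routine instantiation (choosing $\beta=1-\nicefrac{1}{e}-\nicefrac{\epsilon}{2}$ and $k=\lceil\nicefrac{4}{\epsilon}\rceil$) together with the elementary Lipschitz estimate needed to verify the final bound.
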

We note that the online algorithm in the above corollary runs in polynomial time.
If one is not interested in the running time of the algorithm, but rather interested only in coping with an unknown future, then Theorem \ref{thm:online} gives an online algorithm whose competitive ratio is $(\nicefrac[]{1}{2}-\epsilon)$ for any arbitrarily small constant $\epsilon >0$ (by assuming that the offline problem can be solved optimally, i.e., $\beta = 1$).

\subsection{Related Work}
Venkatakrishnan et. al. \cite{bojjacostly} were the first to formally introduce the offline variant of the circuit switch scheduling problem.
They focused on the special case that all entries of the data matrix are significantly small, and analyzed the greedy algorithm.
Though it is known that the greedy algorithm does not provide any worst-case  approximation guarantee for the general case of maximizing a monotone submodular function given a knapsack constraint, \cite{bojjacostly} proved that in the special case of small demand values they obtain an (almost) tight approximation guarantee.
To the best of our knowledge, our algorithm gives the best provable bound for the offline variant of the circuit switch scheduling problem.
A different related variant of the problem is when data does not have to reach its destination in one step, i.e., data can go through several different servers until it reaches its destination \cite{indirect:1,lance,bojjacostly}.

A dual approach is given by Liu et. al. \cite{solstice}, who aim to minimize the total needed time to transmit the entire demand matrix.
Since our algorithm aims to maximize the transmitted data in a time window of $W$, one can use our algorithm as a black box while optimizing over $W$.
It was proven in \cite{theory:NP} that the problem of minimizing the time needed to send all of the data is NP-Complete.
Hence, we can conclude that the circuit switch scheduling problem is also NP-Complete.

The problem of decomposing a demand matrix into matchings, i.e., the decomposition of a matrix into permutation matrices, was considered by \cite{theory:approx,dufosse2017further,kulkarni2017minimum,theory:mirrokni}.
The special cases of zero delay \cite{offline:zerodelay} and infinite delay \cite{offline:infdelay} have also been considered.
Several related, but slightly different, settings include \cite{optimumexists,offline:withdelay1,composite}.

Regarding the theoretical problem of maximizing a monotone submodular function given a knapsack constraint, Sviridenko \cite{sviridenko2004note} (building upon the work of Khuller et. al. \cite{khuller1999budgeted}) presented a tight $(1-\nicefrac[]{1}{e})$-approximation algorithm.
This tight algorithm enumerates over all subsets of elements of size at most three, and greedily extends each subset of size three, and returns the best solution found.
Deviating from the above combinatorial approach of \cite{khuller1999budgeted,sviridenko2004note}, Badanidiyuru and Vondr\'{a}k \cite{badanidiyuru2014fast} and Ene and Nguyen \cite{Alina2017} present algorithms that are based on an approach that extrapolates between continuous and discrete techniques.
Unfortunately, as previously mentioned, none of the above algorithms can be directly applied to the circuit switch problem due to the size of the ground set.

The online version of the circuit switch scheduling problem has been considered from a queuing theory prospective, with delays \cite{celik2016dynamic} and without delays \cite{georgiadis2006resource}. In these works,  guarantees are proven under the assumption that the incoming traffic is from a known distribution or i.i.d. random variables. To the best of our knowledge, the online version has not been studied from a theoretical perspective.

\section{Preliminaries}
First, let us start with a formal description of the problem.
We are given a complete bipartite graph $G=\left(A, B, E\right)$ where $A$ and $B$ are the sets of sending and receiving servers, a constant $\delta \geq 0$ and a time window $W \geq 0$. We are also given the traffic demand matrix of the graph, $D \in \mathbb{R}^{|A|\times |B|}_{+}$, where $D_{ij}$ denotes the amount of data that needs to be sent from sender $i$ to receiver $j$. The $D_{ij}$'s can be seen as weights on the edges of the complete bipartite graph. To simplify the notation, for an edge $e=(i,j)$ we abbreviate $D_{ij}$ to $D_{e}$. Let $\mathcal{M}$ be the collection of all matchings in $G$.
\begin{definition}
The pair $(M, \alpha)$ is called a \textit{configuration} if $M \in \mathcal{M}$ and $\alpha \in \mathbb{R}_{+}$.
\end{definition}
The term \textit{scheduling} a configuration $\left(M, \alpha\right)$ means sending data via the matching $M$ for a duration of time that equals $\alpha$.
For simplicity of presentation, we also interpret a matching $M$ as a $\{0,1\}^{|A|\times |B|}$ matrix where $e \in M$ if and only if the entry of edge $e$ in $M$ equals $1$.
Note that for any edge $e \in M$ 
the total data sent through $e$ would be $\min(D_{e}, \alpha)$ and the total amount of data sent by the configuration would be $||\min\left(D, \alpha M\right) ||_{1} = \sum_{e\in M} \min\left(D_{e},\alpha\right)$ (note that the minimum is taken element-wise).
For simplicity of presentation we may use $||.||_{1}$ and  $||.||$ interchangeably.

Switching from a configuration $\left(M, \alpha\right)$ to  another $\left(M^{\prime}, \alpha^{\prime}\right)$ incurs a given constant delay $\delta$, during which no transmission can be made. Let $\mathcal{C}$ denote the collection of all possible configurations.

\begin{definition}
A \textit{schedule $S$ of size $k$} is a subset $S \subseteq \mathcal{C}$ such that $|S|=k$. We say that $S$ requires a total time of $\sum_{\left(M, \alpha\right)\in S} \left(\alpha + \delta\right)$ to be scheduled.
\end{definition}
 The total time of the schedule includes both the time for sending data with each configuration and the delay in switching between them.
 This brings us to the definition of a feasible schedule.
 \begin{definition}
A schedule $S$ is \textit{feasible} if $\sum_{\alpha:(M, \alpha)\in S} (\alpha + \delta) \leq W$.
\end{definition}
 In the offline setting, the goal is to find a feasible schedule $S$ that maximizes the data sent over the given time window of length $W$. This problem can be formulated as follows:
 \begin{align}
 \max \left\{ \left|\left|\min\left(D, {\textstyle \sum_{(M,\alpha) \in S}} \alpha M\right)\right|\right| _1:S\subseteq \mathcal{C}, {\textstyle \sum_{\alpha:(M,\alpha)\in S}} \left(\alpha + \delta\right) \leq W\right\}.
 \end{align}
We note that $\mathcal{C}$ might be of infinite size.
However, we use standard discretization techniques to limit the set of possible values of $\alpha$ in our algorithms. We will discuss this with more detail in the later relevant sections. For now, assume $\mathcal{C}$ is finite. To facilitate the notation and the analysis of our problem, we turn to a well-known class of functions called \textit{submodular functions}.

\begin{definition}
Given a ground set $N=\{1,2,3,...,n\}$, a set function $f:2^{N}\rightarrow \mathbb{R}_{+}$ is a \textit{submodular function} if for every $A,B \subseteq N$: $f(A) + f(B) \geq f(A\cup B) + f(A \cap B)$.
\end{definition}
For our problem, define $f:2^{\mathcal{C}} \rightarrow \mathbb{R}_{+}$ as: $$f\left(S\right) = \left|\left|\min\left(D, {\textstyle \sum_{\left(M, \alpha\right)\in S}} \alpha M\right)\right|\right| _1 .$$
Moreover, we denote by $f_{S}\left(\left(M,\alpha\right)\right) = f\left(S \cup \left(M,\alpha\right)\right) - f\left(S\right)$ the marginal gain of the schedule $S$ if the configuration $\left(M, \alpha\right)$ was added to it.
It has been shown that $f$ is submodular (refer to Theorem $1$ in \cite{bojjacostly}).
For the sake of completeness, we state the theorem. Note that $f$ is \textit{monotone} if for every $A\subseteq B \subseteq N$: $f(A) \leq f(B)$.
\begin{theorem}[Theorem 1 in \cite{bojjacostly}]
The function $f$ is a monotone submodular function.
\end{theorem}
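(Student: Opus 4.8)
The plan is to reduce the claim to a one-dimensional statement, one per edge. First I would observe that for an edge $e$ the $e$-entry of the matrix $\sum_{(M,\alpha)\in S}\alpha M$ equals $h_e(S):=\sum_{(M,\alpha)\in S:\,e\in M}\alpha$, so that $f$ splits coordinate-wise as $f(S)=\sum_{e\in E} g_e(S)$, where $g_e(S):=\min\bigl(D_e,\,h_e(S)\bigr)$. Since the class of nonnegative monotone submodular functions on a fixed ground set is closed under addition, it suffices to prove that each $g_e$ is nonnegative, monotone, and submodular. Nonnegativity and monotonicity are immediate: $D_e\ge 0$, every duration $\alpha$ is nonnegative, enlarging $S$ only adds nonnegative terms to $h_e(S)$, and $x\mapsto\min(D_e,x)$ is nondecreasing; composing and then summing over $e$ already shows $f$ is nonnegative and monotone.

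For submodularity the key point is that $h_e$ is not merely submodular but \emph{modular}: it is a weighted cardinality function, since each configuration $(M,\alpha)$ with $e\in M$ contributes the fixed weight $\alpha$ to $h_e(S)$ exactly when $(M,\alpha)\in S$, whence $h_e(A)+h_e(B)=h_e(A\cup B)+h_e(A\cap B)$ for all $A,B\subseteq\mathcal{C}$. Writing $p:=h_e(A\cap B)$ and $q:=h_e(A\cup B)$, modularity lets me put $h_e(A)=p+s$ and $h_e(B)=p+t$ with $s,t\ge 0$ and $s+t=q-p$. The submodularity inequality $g_e(A)+g_e(B)\ge g_e(A\cup B)+g_e(A\cap B)$ then reads $\phi(p+s)+\phi(p+t)\ge\phi(p)+\phi(p+s+t)$, equivalently $\phi(p+s)-\phi(p)\ge\phi(p+s+t)-\phi(p+t)$, for the function $\phi(x):=\min(D_e,x)$. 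This says the increment of $\phi$ over a length-$s$ window does not increase when the window is translated to the right by $t\ge 0$, which holds because $\phi$ is concave, being the pointwise minimum of the two affine functions $x\mapsto x$ and $x\mapsto D_e$.

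Combining the two steps, $f$ is a nonnegative sum of monotone submodular functions, hence monotone and submodular. I do not expect a genuine obstacle here; the one point deserving care is verifying that $h_e$ is modular rather than only submodular, since the reduction relies on expressing $h_e(A)$ and $h_e(B)$ as $p+s$ and $p+t$ with $s+t=q-p$. One could instead verify the equivalent diminishing-returns form $f_A\bigl((M,\alpha)\bigr)\ge f_B\bigl((M,\alpha)\bigr)$ whenever $A\subseteq B$, which is arguably a touch cleaner, but the pairwise inequality matches the definition of submodularity used above.
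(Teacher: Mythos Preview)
Your argument is correct. Note, however, that the paper does not actually prove this theorem: it is merely stated for completeness and attributed to Theorem~1 of \cite{bojjacostly}, so there is no proof in the paper against which to compare. Your edge-wise decomposition $f=\sum_{e} g_e$ with $g_e=\phi\circ h_e$, where $h_e$ is modular and $\phi(x)=\min(D_e,x)$ is concave and nondecreasing, is the standard route for showing that such coverage-type functions are monotone submodular, and it goes through without issue.
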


For the online version of the problem, we use a discrete time model. Unlike the offline version, in the online setting we do not know the entire traffic matrix of the graph in the beginning. We start with $D_{0}$ as the demand matrix already present in the initial graph.  At time $t$ an additional traffic matrix $D_{t}$ is revealed to the algorithm that includes new demands for data that need to be transmitted. In the online version of the problem sending configuration $\left(M,\alpha\right)$ means that for the next $\alpha \in \mathbb{Z}_{+}$ time steps our algorithm is busy sending the matching $M$. Switching a configuration to a different one incurs {{an additional}} delay of $\delta \in \mathbb{N}$ steps, during which no data can be sent.
The incoming traffic matrices, at every step {starting with the sending of $\left(M,\alpha\right)$ and ending with the switching cost (a total of $\alpha +\delta$ time steps)}, will accumulate {{and be added to}} the remaining traffic matrix of the graph. 

\section{Offline Circuit Switch Scheduling Problem}

In this section, we prove Theorem~\ref{thm:offline} by giving an approximation algorithm for the circuit switch scheduling problem. Our algorithm is a combination of the greedy algorithm as well as a linear programming based approach. We first show that the greedy algorithm gives close to a $(1-\frac1e)$-approximation if $\delta$, the switching time, is much smaller than the time window. This is done in Section~\ref{sec:greedy}. In Section~\ref{sec:lp}, we give a randomized rounding algorithm for a linear programming relaxation that gives a $(1-\frac1e)$-approximation but runs in time exponential in number of matchings used in the optimal solution. While the natural linear program for the problem has unbounded gap, we show how to bypass this when the schedule has a constant number of matchings.

\subsection{Greedy Algorithm}\label{sec:greedy}

The greedy algorithm is as follows: at each step choose the configuration that maximizes the amount of data it sends per unit of time it uses.
Formally, if $R_{i}$ is the remaining data demand in the graph after $i$ configurations were already chosen, the greedy algorithm will choose the following configuration to be used next:
\begin{equation}
\label{eq:greedy}
\left(M_{i+1}, \alpha_{i+1}\right) = \text{argmax}_{M \in \mathcal{M}, \alpha \in \mathbb{R}_{+}} \frac{||\min\left(R_{i}, \alpha M\right)|| _1}{\alpha + \delta}.
\end{equation}

The greedy algorithm continues to pick configurations until the first time the time constraint is violated or met. Algorithm~\ref{alg:IGA} demonstrates this process. Let $r$ denote this number of steps and $\S_{r}$ the schedule created after $r$ steps of this algorithm. The last chosen configuration may violate the time window budget and a natural strategy is to reduce its duration to the time window $W$ as is done in Step (11)-(12) of the algorithm. Indeed~\cite{bojjacostly} analyzes this algorithm and shows that it performs well if each entry in data matrix is small. They also show that the above optimization problem can be solved using the maximum weight matching problem. We give a different analysis of the algorithm and show that it gives us a $\left(1-\frac{1}{e}-\epsilon\right)$-approximation if $\delta< (\frac{e}{2(e-1)}\epsilon) \cdot W$.

\begin{theorem}\label{thm:greedy}
Let $\S_r$ denote the schedule as returned by the greedy algorithm and $\O$ denote the optimal schedule. Then
$$f(\S_r) \geq \left(1-\frac{2\delta}{W}\right) \left(1-\frac1e\right) f(\O).$$
\end{theorem}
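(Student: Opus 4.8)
The plan is to run the classical greedy analysis for maximizing a monotone submodular function under a knapsack constraint -- here the ``cost'' of a configuration $(M,\alpha)$ is $\alpha+\delta$ and the ``budget'' is $W$ -- and then separately pay for the loss caused by having to shorten the duration of the last chosen configuration so the schedule fits inside the window. Write $\O=\{(N_1,\beta_1),\dots,(N_\ell,\beta_\ell)\}$ for the optimum, let $\emptyset=\S_0\subseteq\S_1\subseteq\cdots$ be the partial greedy schedules with $(M_i,\alpha_i)$ the $i$-th chosen configuration, $c_i=\alpha_i+\delta$, $t_i=\sum_{j\le i}c_j$, $R_i$ the residual demand after $\S_i$, and let $r$ be the first index with $t_r\ge W$ (if the demand is exhausted earlier there is nothing to prove, since then $f(\S)=\|D\|_1\ge f(\O)$). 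Recalling that $f_{\S_i}((M,\alpha))=\|\min(R_i,\alpha M)\|_1$, the core per-step inequality is
\[
  f(\S_{i+1})-f(\S_i)\;\ge\;\frac{c_{i+1}}{W}\bigl(f(\O)-f(\S_i)\bigr),\qquad 0\le i\le r-1,
\]
which follows from (i) monotonicity and submodularity, giving $f(\O)-f(\S_i)\le\sum_{j}f_{\S_i}((N_j,\beta_j))$; (ii) the greedy selection rule, giving $f_{\S_i}((M_{i+1},\alpha_{i+1}))/c_{i+1}\ge f_{\S_i}((N_j,\beta_j))/(\beta_j+\delta)$ for each $j$, since every configuration of $\O$ is admissible in the greedy maximization; and (iii) feasibility of $\O$, giving $\sum_j(\beta_j+\delta)\le W$. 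With $\rho_i=f(\O)-f(\S_i)$ this reads $\rho_{i+1}\le(1-c_{i+1}/W)\rho_i$; if ever $\rho_i\le0$ we are done by monotonicity, so otherwise all factors lie in $(0,1)$ and telescoping together with $1-x\le e^{-x}$ gives $f(\S_{r-1})\ge(1-e^{-t_{r-1}/W})f(\O)$.

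Next I account for the last, truncated configuration: $\S_{r-1}$ is kept in full and $(M_r,\alpha_r)$ is used only for duration $\alpha_r'=W-t_{r-1}-\delta$. If $\alpha_r'\le0$ then $t_{r-1}\ge W-\delta$ and already $f(\S_r)\ge f(\S_{r-1})\ge(1-e^{\delta/W-1})f(\O)$, so assume $\alpha_r'>0$. Since $g(\alpha):=f_{\S_{r-1}}((M_r,\alpha))=\sum_{e\in M_r}\min((R_{r-1})_e,\alpha)$ is concave, nondecreasing, with $g(0)=0$, we have $g(\alpha_r')\ge(\alpha_r'/\alpha_r)\,g(\alpha_r)$; combining with $g(\alpha_r)\ge(c_r/W)\rho_{r-1}$ (the per-step inequality at $i=r-1$) and $c_r\ge\alpha_r$ yields $g(\alpha_r')\ge(\alpha_r'/W)\rho_{r-1}$. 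Writing $p=t_{r-1}/W$, so that $\alpha_r'/W=1-p-\delta/W\in[0,1]$,
\[
  f(\S_r)=f(\S_{r-1})+g(\alpha_r')\;\ge\;\bigl(1-\tfrac{\alpha_r'}{W}\bigr)f(\S_{r-1})+\tfrac{\alpha_r'}{W}\,f(\O)\;\ge\;\Bigl(1-\bigl(p+\tfrac{\delta}{W}\bigr)e^{-p}\Bigr)f(\O).
\]

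To finish, observe that $p\mapsto1-(p+\delta/W)e^{-p}$ has derivative $e^{-p}(p+\delta/W-1)\le0$ on the feasible range $p\in[0,1-\delta/W]$, so it is minimized at $p=1-\delta/W$, where it equals $1-e^{\delta/W-1}$ (the same bound obtained above in the degenerate case $\alpha_r'\le0$). Hence $f(\S_r)\ge(1-e^{\delta/W-1})f(\O)$, and it remains only to check the one-variable inequality $1-e^{\delta/W-1}\ge(1-2\delta/W)(1-1/e)$ for $\delta/W\in[0,\tfrac12]$ (the statement being vacuous for $\delta/W>\tfrac12$): both sides agree at $\delta/W=0$, the left side is concave while the right side is affine, and the left side is positive at $\delta/W=\tfrac12$, so concavity finishes it.

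The step I expect to be the main obstacle is the handling of the truncated last configuration: the crude estimate ``truncation costs at most one configuration's worth'' is far too lossy, and one genuinely needs the concavity of $\alpha\mapsto f_{\S_{r-1}}((M_r,\alpha))$ to argue that keeping a $\lambda$-fraction of that configuration's duration retains at least a $\lambda$-fraction of its marginal gain; feeding this back into the recursion is what converts the truncation loss into the benign multiplicative factor $e^{\delta/W-1}=e^{-1}e^{\delta/W}$ rather than a constant additive loss. Everything else is routine bookkeeping.
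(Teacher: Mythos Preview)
Your argument is correct, and it follows a genuinely different route from the paper. The paper does \emph{not} compare the greedy schedule directly against $\O$. Instead it first proves a standalone reduction (Lemma~\ref{lem:reduce}): any schedule on window $W$ can be shrunk to a schedule on window $W-\delta$ while losing at most a $2\delta/W$ fraction of its value. Applying this to $\O$ yields an optimum $\O'$ on the smaller window with $f(\O')\ge(1-2\delta/W)f(\O)$, and then a standard greedy argument (Lemma~\ref{lem:beta}) gives $f(\S_r)\ge(1-1/e)f(\O')$, the point being that now the feasibility budget $W-\delta$ of $\O'$ matches the total cost $\sum_i(\beta_i+\delta)=W$ used by the truncated greedy schedule well enough for the usual telescoping to go through cleanly.

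Your approach skips the reduction entirely: you compare against $\O$ with budget $W$, carry the resulting slack as the free parameter $p=t_{r-1}/W$, use concavity of $\alpha\mapsto f_{\S_{r-1}}((M_r,\alpha))$ to bound the truncated gain, and then minimize $1-(p+\delta/W)e^{-p}$ over $p$. This yields the sharper intermediate bound $f(\S_r)\ge(1-e^{\delta/W-1})f(\O)$, from which the theorem follows by the elementary inequality you prove at the end. So your route is both more direct and quantitatively stronger. On the other hand, the paper's Lemma~\ref{lem:reduce} is not wasted work: it is invoked again, independently, in the analysis of the online algorithm (the base case of the induction in Theorem~\ref{thm:online}), so isolating it as a separate lemma has value beyond the offline proof.
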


\begin{algorithm}[t]
\caption{Greedy Algorithm}\label{alg:IGA}
\begin{algorithmic}[1]
\State \texttt{Input:} $G=\left(A,B, E\right),D, \delta, W$
\State \texttt{Output:} $\{\left(M_{1},\alpha_{1}\right), \dots, \left(M_{r}, \alpha_{r}\right)\}$
\State $\S \gets \emptyset$. $i\gets 0$, $R_1 \gets D$.
\While{$\sum_{\alpha:(M,\alpha) \in \S} \left(\alpha+\delta\right) \leq W$}
\State $i\gets i+1$.
\State $\left(M_{i}, \alpha_{i}\right) \gets \arg\max_{M \in \mathcal{M}, \alpha \in \mathbb{R}_{+} } \frac{||\min\left(R_{i}, \alpha M\right)||}{\alpha + \delta} $.
\State $\S \gets \S \cup \{\left(M_{i}, \alpha_{i}\right)\}$.
\State $R_{i+1} \gets R_{i} - \min\left(R_{i}, \alpha_{i}M_{i}\right)$.
\EndWhile
\State $r\gets i$.
\If{ $\sum_{(M,\alpha)\in \S} \left(\alpha + \delta\right) > W$ }
\State $\beta_r \gets W-\delta-\sum_{j=1}^{r-1}(\alpha_j+\delta)$
\If{ $\beta_r \geq 0$}
\State $\S \gets (\S \setminus \{(M_{r},\alpha_{r})\}) \cup \{(M_{r}, \beta_r)\}$
\Else
\State $\S \gets (\S \setminus \{(M_{r},\alpha_{r})\})$
\EndIf
\EndIf\\
\Return $\S$
\end{algorithmic}
\end{algorithm}


\begin{proof}
To analyze the algorithm, we first show that the objective of the optimal schedule of a slightly smaller time window $W-\delta$ is not much smaller than the optimum value of the optimum schedule for time window $W$ in Lemma~\ref{lem:reduce}. Indeed, the lemma states that given any schedule for time window $W$, for example the optimal schedule, there exists a schedule with time window $W-\delta$ of a comparable objective.

\begin{lemma}\label{lem:reduce}
For any schedule $\S$ for a time window of $W$, there is a schedule $\Tilde{\S}$ on a window of $W - \delta$ time such that $f(\Tilde{\S}) \geq \left(1-\frac{2\delta}{W}\right)f\left(\S\right)$.
\end{lemma}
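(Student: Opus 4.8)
The plan is to take a feasible schedule $\S$ for the time window $W$ and delete from it a carefully chosen subset of configurations whose total time (including switching delays) is at least $\delta$, so that the remaining schedule fits in a window of length $W-\delta$. Since $f$ is monotone submodular and hence subadditive over the configurations, the value lost by deleting a subset $T\subseteq\S$ is at most $\sum_{(M,\alpha)\in T} f(\{(M,\alpha)\})$, so it suffices to find such a $T$ whose configurations contribute only a small fraction of $f(\S)$.

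First I would write $\S=\{(M_1,\alpha_1),\dots,(M_k,\alpha_k)\}$ and let $g_j := f_{\{(M_1,\alpha_1),\dots,(M_{j-1},\alpha_{j-1})\}}((M_j,\alpha_j))$ be the marginal gains, so that $\sum_{j=1}^k g_j = f(\S)$. The total time used is $\sum_{j=1}^k(\alpha_j+\delta)\le W$. The key observation is that each configuration uses time at least $\delta$ (the switching cost), so if $\S$ contains at least two configurations we could simply drop the single configuration with the smallest value; but to get the exact $1-\tfrac{2\delta}{W}$ bound I would instead partition (or greedily group) $\S$ into blocks of consecutive configurations, each block having total time in the range $[\delta, 2\delta)$ (a block may need up to two configurations to reach time $\delta$, since individual $\alpha_j$ can be small but each costs $\ge\delta$ in switching — this is where the factor $2\delta$ rather than $\delta$ comes from). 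There are at least $\lceil W/(2\delta)\rceil$ such blocks if the window is (nearly) full, hence one of them has $f$-value (sum of its configurations' individual values, or better, its marginal-gain contribution) at most $f(\S)/\lceil W/(2\delta)\rceil \le \tfrac{2\delta}{W} f(\S)$. Deleting that block frees up at least $\delta$ units of time and loses at most $\tfrac{2\delta}{W}f(\S)$ of the objective, so the remaining schedule $\Tilde\S$ fits in $W-\delta$ and satisfies $f(\Tilde\S)\ge(1-\tfrac{2\delta}{W})f(\S)$.

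The main obstacle I anticipate is the bookkeeping around two edge cases: (i) when $\S$ does not actually fill the window (its total time is well below $W$), in which case no deletion is needed and $\Tilde\S=\S$ works trivially, so one should phrase the argument in terms of the \emph{used} time rather than $W$ and note the bound only improves; and (ii) justifying that the value removed is controlled by the \emph{marginal gains in a fixed order} rather than by the (possibly larger) standalone values $f(\{(M_j,\alpha_j)\})$ — here one orders $\S$ arbitrarily, uses telescoping $\sum g_j=f(\S)$, and invokes the fact that for any subset $T$ obtained by deleting a contiguous block, $f(\S)-f(\S\setminus T)\le\sum_{j\in T} g_j$ by submodularity (removing later elements only decreases marginals). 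Getting the block sizes and the counting of blocks exactly right so that the ratio comes out to the stated $1-\tfrac{2\delta}{W}$, and not something slightly weaker, is the delicate part; everything else is routine.
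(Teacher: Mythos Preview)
Your proposal has a genuine gap: it only ever \emph{removes whole configurations}, and your blocking argument implicitly assumes every configuration has total time $\alpha_j+\delta<2\delta$, i.e.\ $\alpha_j<\delta$. But nothing in the problem forces this. Consider the schedule consisting of a single configuration $(M,W-\delta)$: there is exactly one ``block'' of time $W$, not $\lceil W/(2\delta)\rceil$ of them, and deleting it loses everything. More generally, whenever some $\alpha_j\ge\delta$, you cannot form blocks of time in $[\delta,2\delta)$ each containing whole configurations, so the averaging step (``at least $\lceil W/(2\delta)\rceil$ blocks, hence one is cheap'') collapses. Your parenthetical (``a block may need up to two configurations to reach time $\delta$'') is also inconsistent with your own observation that each configuration already costs at least $\delta$ from the switching delay.

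The paper's proof avoids this by a two-case split on whether most of the window is spent transmitting or switching. If $T_{\text{switch}}\ge W/2$ then there are at least $W/(2\delta)$ configurations and an averaging argument lets you drop one whole configuration cheaply---this is essentially your argument, specialized to the regime where it is valid. If instead $T_{\text{data}}\ge W/2$, the paper does something you never consider: it \emph{shortens} the duration $\alpha_j$ of a configuration by $\delta$ rather than deleting it, and the density bound $f(\S)/T_{\text{data}}\le 2f(\S)/W$ controls the loss. To fix your proof you must add this ``shorten rather than delete'' move (or equivalently scale all $\alpha_j$ down proportionally) to handle schedules with few long configurations.
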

\begin{proof}
Let $T_{\textrm{data}}$ be the total time spent sending data and $T_{\textrm{switch}}$ be the total time spent switching between configurations. Thus, $W = T_{\textrm{data}} + T_{\textrm{switch}}$. We prove that we can remove $\delta$ time from some configuration or we can remove an entire configuration from $\S$ while reducing the objective by no more than $\frac{2\delta}{W}$ fraction of the objective. Consider the two following cases for the given $\S$. If $T_{\textrm{data}}\geq \frac{W}{2}$, we have $\frac{f\left(\S\right)}{T_{\textrm{data}}} \leq \frac{2}{W}f\left(\S\right)$. Thus there exists a configuration that we can deduct $\delta$ time from and at most lose $\frac{2 \delta}{W}f\left(\S\right)$. If $T_{\textrm{switch}} \geq \frac{W}{2}$. This means the number of configurations is at least $\frac{W}{2\delta}$. Each configuration on average sends $\frac{2\delta}{W}f\left(\S\right)$ data. Therefore, there is a configuration we can completely remove from our schedule such that total amount of lost data is at most $\frac{2\delta}{W}f\left(\S\right)$. In both cases we can reduce the time taken by the schedule by at least $\delta$ and have a new schedule $\Tilde{\S}$ such that $f\left(\Tilde{\S}\right) \geq \left(1 - \frac{2\delta}{W}\right)f\left(\S\right)$.
\end{proof}

Let $\O'$ denote the optimal solution with time window $W-\delta$. From Lemma~\ref{lem:reduce}, we have
$f(\O')\geq \left(1-\frac{2\delta}{W}\right) f(\O)$. In the following lemma, we show that the output of the greedy algorithm is at least a $\left(1-\frac{1}{e}\right)$-approximation of $f(\O')$. The proof of the lemma follows standard analysis for greedy algorithms for coverage functions, or more generally submodular functions, except for the being careful at the last step. The proof appears in Appendix \ref{apx:MissingProofs}. The proof of Theorem~\ref{thm:greedy} now follows immediately.

 \begin{lemma}\label{lem:beta}
If $\O'$ is the optimum schedule on time window $W - \delta$, then $$f(\S_r)\geq (1-\frac1e) f(\O').$$
\end{lemma}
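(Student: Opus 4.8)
The plan is to run the standard greedy-for-submodular-under-knapsack argument, tracking the marginal gain per unit cost against the optimum value $f(\O')$, and then handle the truncation of the last configuration separately. Let $\S_0 = \emptyset, \S_1, \dots, \S_{r-1}$ be the greedy schedule before the (possibly truncated) last configuration, where $\S_i = \{(M_1,\alpha_1),\dots,(M_i,\alpha_i)\}$, and let $c_i = \alpha_i + \delta$ be the cost (time) of the $i$-th configuration. The key inequality is that at each step $i$, because the greedy rule \eqref{eq:greedy} picks the configuration maximizing marginal value per unit time over \emph{all} configurations, and because $\O'$ (together with its own durations) is itself a collection of configurations fitting in time $W-\delta$, submodularity gives
\[
\frac{f_{\S_{i-1}}\big((M_i,\alpha_i)\big)}{c_i} \;\geq\; \frac{f(\O') - f(\S_{i-1})}{W-\delta}.
\]
Here one uses monotonicity and submodularity to bound $f(\O') - f(\S_{i-1}) \leq \sum_{(M,\alpha)\in\O'} f_{\S_{i-1}}((M,\alpha))$, and then the fact that each summand divided by its cost is at most the greedy ratio, while the costs of $\O'$ sum to at most $W-\delta$.

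The second step is the telescoping/recursion. Writing $g_i = f(\O') - f(\S_i)$, the inequality above rearranges to $g_i \leq g_{i-1}\big(1 - c_i/(W-\delta)\big)$, and unrolling over $i=1,\dots,\ell$ (where $\ell$ is chosen so that $\S_\ell$ is the first greedy prefix whose total cost exceeds $W-\delta$, i.e.\ just before or at the point the budget is used up) yields
\[
f(\O') - f(\S_\ell) \;\leq\; f(\O')\prod_{i=1}^{\ell}\Big(1 - \frac{c_i}{W-\delta}\Big) \;\leq\; f(\O')\, e^{-\sum_{i=1}^{\ell} c_i/(W-\delta)} \;\leq\; f(\O')\,e^{-1},
\]
using $1-x \leq e^{-x}$ and $\sum_{i=1}^\ell c_i \geq W-\delta$. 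This already gives $f(\S_\ell) \geq (1-1/e)f(\O')$.

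The main obstacle — and the "being careful at the last step" the text warns about — is that $\S_\ell$ may not be a feasible schedule for the window $W$: its total cost can exceed $W-\delta$, and the greedy algorithm's output $\S_r$ truncates (or drops) the last configuration in Steps (11)–(16). I need to argue the truncation does not destroy the bound. The clean way is to note that the submodular value function $f$ restricted to scaling down the duration of a single configuration is concave and in fact $f(\S_{\ell-1}\cup\{(M_\ell,\beta_\ell)\})$ dominates a suitable fraction; more robustly, one applies the standard trick of separating the analysis into the contribution of $\S_{\ell-1}$ (a feasible schedule) and the marginal gain of the last element, and observes that at least one of "$\S_{\ell-1}$" or "the single best truncated/standalone configuration fitting in $W$" already achieves $(1-1/e)f(\O')$ — or, following the algorithm literally, that keeping the first $\ell-1$ configurations plus the truncated $(M_\ell,\beta_\ell)$ with $\beta_\ell = W-\delta-\sum_{j<\ell}(\alpha_j+\delta)$ still satisfies the telescoped inequality because the per-unit-time gain of the last step only improves (the marginal value per unit time is nondecreasing as we shrink $\alpha$, by concavity of $\min(\cdot,\alpha)$ in $\alpha$), so the same recursion with $c_\ell$ replaced by the truncated cost still sums the exponents to exactly $1$. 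Either route closes the gap; I expect the write-up to take the second, matching Steps (11)–(16) of Algorithm~\ref{alg:IGA} directly.
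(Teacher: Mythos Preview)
Your proposal is correct and follows essentially the same approach as the paper: establish the per-step marginal inequality against $f(\O')$ via submodularity and the greedy rule, telescope using $1-x\le e^{-x}$ together with the fact that the costs sum to at least $W-\delta$, and handle the truncated final configuration via concavity of the marginal value in its duration. This is exactly your ``second route,'' and it is precisely what the paper does (the paper also treats the case $\beta_r<0$ by simply dropping the last configuration, as in Steps~(15)--(16) of Algorithm~\ref{alg:IGA}).
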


\end{proof}
\subsection{Linear Programming Approach for Constant Number of Configurations}\label{sec:lp}
In this section, we assume that we want to schedule at most a given constant $k$ number of configurations and prove the following theorem.

\begin{theorem}\label{thm:constant}
There exists a randomized polynomial time algorithm that given an integer $k$ and an instance of the circuit switch scheduling problem returns a feasible schedule whose objective, in expectation, is at least $(1-\frac1e-\epsilon)$ of the optimum solution that uses at most $k$ matchings. Moreover the running time of the algorithm is polynomial in $\frac{n}{\epsilon^k}$.

\end{theorem}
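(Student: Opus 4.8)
The plan is to write down a linear programming relaxation that exploits the assumption $|S| \le k$ for a fixed constant $k$, solve it, and round it to a feasible schedule losing only a $(1-\frac1e)$ factor in expectation, plus an $\epsilon$ loss coming from discretization. First I would discretize the durations: since in the optimal schedule each $\alpha_j \le W$, I replace the continuous range $\mathbb{R}_+$ by a geometric-type grid of powers of $(1+\epsilon')$ up to $W$ (together with $0$), so there are only $O(\epsilon'^{-1}\log(W/\delta))$ candidate durations, and rounding each duration down to the grid loses at most an $\epsilon'$ fraction of the data that configuration contributes while not increasing the time used. Discretizing this way, however, still leaves exponentially many matchings, so the LP cannot have an explicit variable per configuration.

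The key idea to avoid the exponential blow-up is the observation that, with at most $k$ configurations, the objective $f(S)=\|\min(D,\sum_{(M,\alpha)\in S}\alpha M)\|_1$ decomposes \emph{edge by edge}: for each edge $e$, its contribution is $\min\big(D_e, \sum_{(M,\alpha)\in S,\, e\in M}\alpha\big)$, which is a concave function of the total time that $e$ is ``covered.'' So I would set up the LP with, for each of the $k$ slots $t\in\{1,\dots,k\}$, a variable $x_{M,\alpha,t}$ indicating that slot $t$ uses configuration $(M,\alpha)$, and an edge-coverage variable $y_e$ for the data collected on $e$. The constraints are: $\sum_{M,\alpha} x_{M,\alpha,t} \le 1$ for each slot $t$; the knapsack constraint $\sum_{t}\sum_{M,\alpha} x_{M,\alpha,t}(\alpha+\delta) \le W$; for each edge, $y_e \le D_e$ and $y_e \le \sum_{t}\sum_{(M,\alpha): e\in M} \alpha\, x_{M,\alpha,t}$; maximize $\sum_e y_e$. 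Although this LP has exponentially many $x$-variables, one can solve it via the ellipsoid method with a separation oracle for the dual — the separation problem for a fixed slot reduces to a maximum-weight matching computation, exactly as~\cite{bojjacostly} observed for the greedy step — or, more simply, one can write an equivalent compact LP over the bipartite-matching polytope (using edge-usage variables per slot, which is integral) so that the number of variables is polynomial. I would adopt the compact formulation to keep the running time clean.

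Given a fractional optimum, the rounding is independent randomized rounding per slot: for slot $t$, pick configuration $(M,\alpha)$ with probability $x^*_{M,\alpha,t}$ (and nothing with the remaining probability). This never violates the knapsack constraint in expectation, and since each slot's total time is at most its fractional expectation, by an averaging/Markov argument (or by the same enumeration-of-the-heaviest-few-elements trick as in~\cite{sviridenko2004note}, guessing the $O(1)$ largest configurations of $\O$ and rounding the rest) one obtains a feasible schedule. For the objective: fix an edge $e$; its covered time is a sum of independent contributions, each of expectation equal to the fractional coverage, and the data collected is the concave function $\min(D_e,\cdot)$ of this sum. Standard concentration-free arguments for rounding against a concave/coverage objective — the ``$1-1/e$'' lemma for independent rounding against $\min(D_e,\sum)$, which is exactly the pipage/randomized-rounding analysis for coverage functions — give $\mathbb{E}[\min(D_e,\sum_t (\cdot))] \ge (1-\frac1e)\,\min(D_e, \text{frac. coverage of }e)$ for each $e$; summing over $e$ and combining with the $\epsilon'$ discretization loss yields the claimed $(1-\frac1e-\epsilon)$ guarantee. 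The running time is dominated by solving the compact LP (polynomial in $n$ and $1/\epsilon$) and, if one uses the guessing step for feasibility, an enumeration over $O((\text{number of discretized configurations})^{O(1)})$ choices; charging the $\log(W/\delta)$ factor carefully gives the stated $\mathrm{poly}(n/\epsilon^k)$ bound.

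The main obstacle I anticipate is \emph{not} the coverage rounding, which is textbook, but reconciling two things simultaneously: (i) keeping the LP polynomial-size despite exponentially many matchings — this needs the per-slot matching-polytope reformulation and a short argument that its vertices correspond to genuine matchings (integrality of the bipartite matching polytope) so that a vertex/rounded solution is implementable as real configurations; and (ii) ensuring exact feasibility of the knapsack constraint after independent rounding without destroying the approximation ratio. Point (ii) is the classical subtlety in knapsack-constrained submodular maximization, handled in~\cite{sviridenko2004note} by enumerating the $O(1)$ most valuable elements of the optimum and greedily/fractionally completing; here ``$O(1)$'' is safe precisely because we assumed $|S|\le k$ with $k$ constant, so the enumeration is over polynomially many (in $n/\epsilon$, with exponent depending on $k$) guesses, which is where the $\epsilon^{-k}$ in the running time comes from. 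I would present (i) first, then the rounding, then fold in (ii) and the discretization error at the end.
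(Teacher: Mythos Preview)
Your core mechanism---a per-slot LP over matchings, independent randomized rounding, and the $(1-\tfrac1e)$ coverage bound for $\min(D_e,\sum_i b_i Y_i)$---is exactly what the paper does (the paper invokes the Andelman--Mansour lemma for the last step). The difference is in how the durations are handled, and this is where your proposal has a real gap.

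The paper does \emph{not} put the durations $\alpha$ into the LP. Instead it first \emph{guesses} the tuple $(\alpha_1^*,\ldots,\alpha_k^*)$ by enumerating over a discretized grid; this enumeration is the sole source of the $\epsilon^{-k}$ in the running time. With the durations fixed, the LP has only the variables $x_{M,i}$ and $z_e$, the per-slot constraint $\sum_M x_{M,i}\le 1$, and the coverage constraints $z_e\le D_e$ and $z_e\le\sum_i\sum_{M\ni e}\alpha_i^* x_{M,i}$. There is \emph{no knapsack constraint at all}: any rounded solution uses the guessed durations and is automatically feasible, since the correct guess already satisfies $\sum_i(\alpha_i^*+\delta)\le W$. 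The LP is solved by ellipsoid on the dual (separation is maximum-weight matching), and rounding is the per-slot independent sampling you describe.

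Your version keeps $\alpha$ as part of the LP decision, which forces you to carry the knapsack constraint through rounding. Your proposed fix---``guessing the $O(1)$ largest configurations of $\mathcal{O}$'' in the style of Sviridenko---does not work as stated: a configuration is a (matching, duration) pair, and there are exponentially many matchings, so that enumeration is not polynomial. What \emph{would} work is to guess only the $k$ durations, but once you do that the knapsack constraint disappears and you are exactly at the paper's formulation. In short, the obstacle you flag in point~(ii) is self-inflicted; the paper sidesteps it entirely by moving the duration enumeration outside the LP, which also makes your compact matching-polytope reformulation unnecessary.
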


 Let us denote optimum schedule by $\O=\{(M_{1}^{*}, \alpha_{1}^{*}), \dots, (M_{k}^{*}, \alpha_{k}^{*})\}$. Note that, without the loss of generality, we can assume that we know what the $\alpha^{*}_{i}$'s are. This can be done by a standard discretization of the possible values. Since, the number of configurations is constant this enumeration will be polynomial in $\frac{1}{\epsilon^k}$ to an accuracy of $\epsilon$. The total data sent by a schedule $\S$ is $f(\S)=||\min(D, \sum_{(M,\alpha)\in \S} \alpha M)||_{1}$. However, in this section, it is more beneficial to consider the total data as the sum of total data sent over each edge. We model the total data by $Z = \sum_{e \in E} z_{e}$, where $z_{e}$ is the amount of data that was sent through edge $e$ in our graph. In the case of the optimum, $z^*_{e} = \min(D_e, \sum_{\alpha^{*}:(M^{*},\alpha^{*})\in \O: e \in M^{*}} \alpha^{*})$ and $Z^* = \sum_{e \in E} z^*_{e}$. We can formulate the following integer program for this problem.
\begin{align}
(\mathcal{P})~~~~~\max~~~~~ & \sum _{e\in E} z_e & \label{LP:objective}\\
~~~~~ s.t.~~~~~ & \sum _{M\in \mathcal{M}}x_{M,i}\leq 1 & \forall i=1,\ldots,k \label{LP:constraint1}\\
& z_e \leq D_e & \forall e\in E \label{LP:constraint2}\\
& z_e \leq \sum _{i=1}^{k}\sum _{M\in \mathcal{M}:e\in M}\alpha^*_i \cdot x_{M,i}& \forall e\in E \label{LP:constraint3}\\
& x_{M,i} \in \{0,1\} & \forall e\in E, \forall M\in \mathcal{M}, \forall i=1,\ldots,k \nonumber
\end{align}

Constraints \eqref{LP:constraint1} is to ensure that only one matching is considered in every time interval. Constraint \eqref{LP:constraint2} and \eqref{LP:constraint3} are to model the total data sent. We can relax this integer program to an LP by changing the $x_{M,i} \in \{0,1\}$ to $0 \leq x_{M,i} \leq 1$. The following lemma states that the relaxed linear program is a relaxation of our problem for the constant number of configurations.

\begin{lemma}\label{lem:relaxedLP}
Let $Z_{LP}$ be the value of an optimum solution to the LP, then $Z_{LP} \geq Z^{*}$
\end{lemma}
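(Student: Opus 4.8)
The plan is to show that the optimal schedule $\mathcal{O}$ naturally induces a feasible integral solution to the linear program $(\mathcal{P})$ (before relaxation), and that the objective value of this induced solution equals $Z^*$. Since $Z_{LP}$ is the optimum of the relaxation, which only enlarges the feasible region, we get $Z_{LP} \geq Z^*$ immediately.

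\begin{proof}
Recall that $\mathcal{O} = \{(M_1^*, \alpha_1^*), \dots, (M_k^*, \alpha_k^*)\}$, and that we have guessed the values $\alpha_1^*, \dots, \alpha_k^*$ correctly (up to the discretization, which we ignore here). Define a candidate solution to $(\mathcal{P})$ as follows: for each $i = 1, \dots, k$, set $x_{M_i^*, i} = 1$ and $x_{M, i} = 0$ for all $M \neq M_i^*$; and for each edge $e \in E$, set $z_e = z_e^* = \min\left(D_e, \sum_{i : e \in M_i^*} \alpha_i^*\right)$.

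We check feasibility. Constraint~\eqref{LP:constraint1}: for each $i$, $\sum_{M \in \mathcal{M}} x_{M,i} = x_{M_i^*, i} = 1 \leq 1$. Constraint~\eqref{LP:constraint2}: $z_e = \min(D_e, \cdot) \leq D_e$. Constraint~\eqref{LP:constraint3}: by construction $z_e = \min\left(D_e, \sum_{i : e \in M_i^*} \alpha_i^*\right) \leq \sum_{i : e \in M_i^*} \alpha_i^* = \sum_{i=1}^k \sum_{M \in \mathcal{M} : e \in M} \alpha_i^* \cdot x_{M,i}$, where the last equality holds because $x_{M,i}$ is nonzero only for $M = M_i^*$. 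Finally the integrality constraints hold by construction. Hence this is a feasible solution to the integer program $(\mathcal{P})$, and a fortiori to its LP relaxation.

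The objective value of this solution is $\sum_{e \in E} z_e = \sum_{e \in E} z_e^* = Z^*$. Therefore the LP optimum satisfies $Z_{LP} \geq Z^*$.
\end{proof}

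I expect the main subtlety — not difficulty — to be the bookkeeping around the discretization of the $\alpha_i^*$ values: strictly speaking we only know an approximation of each $\alpha_i^*$, so $(\mathcal{P})$ is built with the guessed values and the argument above shows $Z_{LP}$ is at least the value of the \emph{discretized} optimum, with the $\epsilon$-loss absorbed later in the proof of Theorem~\ref{thm:constant}. There is no hard mathematical obstacle here; the lemma is essentially the observation that an integral assignment of matchings to slots, together with the induced per-edge data amounts, is a valid LP point. The only thing one must be careful about is matching the indexing: the optimal schedule has its $k$ configurations assigned to the $k$ ``slots'' $i = 1, \dots, k$ in an arbitrary but fixed order, and then constraint~\eqref{LP:constraint3} is satisfied with equality on the right-hand side before the $\min$ with $D_e$ is taken.
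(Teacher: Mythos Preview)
Your proof is correct and follows essentially the same approach as the paper: construct the integral LP solution by setting $x_{M_i^*,i}=1$ and $z_e=z_e^*$, verify feasibility of all constraints, and conclude that the LP optimum dominates $Z^*$. Your version is in fact slightly more explicit than the paper's in checking constraints~\eqref{LP:constraint2} and~\eqref{LP:constraint3} separately, and your remark about the discretization subtlety is apt.
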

\begin{proof}
If $\O=\{(M_{1}^{*}, \alpha_{1}^{*}), \dots, (M_{k}^{*}, \alpha_{k}^{*})\}$ is our optimum answer, based on $\O$ we will create a feasible answer to the LP. For every $(M^{*}_{i}, \alpha^{*}_{i}) \in \O$, we set $x_{M^{*},i}=1$. Clearly, the constraint \ref{LP:constraint1} is satisfied since we picked exactly one matching for every interval. The constraints \ref{LP:constraint2} and \ref{LP:constraint3} is by definition satisfied since $f(\O) = ||\min(D, \sum_{(M, \alpha) \in \O} \alpha M)||$ and the constraints are modeling this minimum. This argument shows that the optimum answer is feasible in the LP and since the LP is a maximization problem we can conclude that $Z_{LP}\geq f(\O)$.
\end{proof}
The LP contains an exponential number of variables, since the number of matchings in the complete graph is exponential in the size of the graph. To be able to solve this program we need to introduce a separation oracle for the dual of this LP. The following program is the dual of our LP.
\begin{align}
(\mathcal{D})~~~~~\min~~~~~ & \sum _{i=1}^{k} y_i + \sum _{e\in E}d_e a_e & \label{DUAL:objective}\\
~~~~~ s.t. ~~~~~ & y_i \geq \alpha_i^* \sum _{e\in M}b_e & \forall M\in \mathcal{M}, \forall i=1,\ldots,k \label{DUAL:constraint1}\\
& a_e+b_e \geq 1 & \forall e\in E \label{DUAL:constraint2}\\
& a_e\geq 0, ~ b_e\geq 0, ~ y_i\geq 0 & \forall e\in E, \forall i=1,\ldots,k \nonumber
\end{align}
The Lemma \ref{lem:seporacle} states the existence of a separation oracle.
\begin{lemma}\label{lem:seporacle}
The dual program $\mathcal{D}$ admits a polynomial time separation oracle.
\end{lemma}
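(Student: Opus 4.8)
The plan is to exhibit a polynomial-time separation oracle for the dual program $\mathcal{D}$ by observing that the only source of exponentially many constraints is the family \eqref{DUAL:constraint1}, indexed by matchings $M \in \mathcal{M}$. Given a candidate dual point $(y, a, b)$, constraints \eqref{DUAL:constraint2} (one per edge) and the nonnegativity constraints can be checked directly in time $O(|E| + k)$. So the work is entirely in checking, for each fixed $i \in \{1, \dots, k\}$, whether $y_i \geq \alpha_i^* \sum_{e \in M} b_e$ holds for every matching $M$; equivalently, whether $y_i \geq \alpha_i^* \cdot \max_{M \in \mathcal{M}} \sum_{e \in M} b_e$. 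Here is the key point: since $\alpha_i^* \geq 0$ and $b_e \geq 0$ for all $e$, the quantity $\max_{M \in \mathcal{M}} \sum_{e \in M} b_e$ is exactly the value of a \emph{maximum weight matching} in the complete bipartite graph $G = (A, B, E)$ with edge weights $b_e$.

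The steps, in order, would be: (1) On input $(y, a, b)$, first verify \eqref{DUAL:constraint2} and the nonnegativity constraints, returning any one of them as a violated inequality if it fails. (2) For each $i = 1, \dots, k$, compute a maximum weight matching $M_i^{\max}$ of $G$ under weights $\{b_e\}_{e \in E}$ using any standard polynomial-time algorithm (e.g.\ the Hungarian algorithm); let $W_i = \sum_{e \in M_i^{\max}} b_e$ be its weight. (3) If $y_i < \alpha_i^* W_i$ for some $i$, output the violated constraint of type \eqref{DUAL:constraint1} corresponding to $M = M_i^{\max}$ and that index $i$; this is a valid separating hyperplane because it is violated by the current point, and any matching violating \eqref{DUAL:constraint1} for index $i$ would have weight at least $y_i / \alpha_i^*$, so if the maximum-weight one does not violate it, none does. (4) If no violation is found across all $i$ and all the per-edge constraints hold, declare the point feasible. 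Since maximum weight matching is solvable in polynomial time and we invoke it only $k$ times (with $k$ a constant, or in any case polynomially bounded), the whole oracle runs in polynomial time. By the equivalence of separation and optimization (the ellipsoid method), this gives a polynomial-time algorithm to solve $\mathcal{D}$, and hence $\mathcal{P}$, despite the exponential number of matching variables.

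The only mild subtlety — and the step I would be most careful about — is the reduction "$\sum_{e \in M} b_e \leq y_i/\alpha_i^*$ for all $M$" $\iff$ "the maximum-weight matching has weight $\leq y_i/\alpha_i^*$," which must handle the degenerate case $\alpha_i^* = 0$ separately (then the constraint is just $y_i \geq 0$, already covered by nonnegativity) and must note that because $b_e \geq 0$ we may restrict attention to perfect or maximum-cardinality matchings without loss — equivalently, maximizing $\sum_{e \in M} b_e$ over all matchings $M$ (of any size) is a standard and polynomially solvable problem. Everything else is bookkeeping: the oracle either certifies membership in the feasible polytope or returns an explicit violated inequality, which is precisely what the ellipsoid method requires.
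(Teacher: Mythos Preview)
Your proposal is correct and follows essentially the same approach as the paper: check the polynomially many constraints \eqref{DUAL:constraint2} and nonnegativity directly, and for each $i$ reduce the exponentially many constraints \eqref{DUAL:constraint1} to a single maximum weight matching computation with edge weights $\{b_e\}_{e\in E}$, comparing its value to $y_i/\alpha_i^*$. Your treatment is in fact slightly more careful than the paper's (you explicitly handle the degenerate case $\alpha_i^*=0$ and spell out the ellipsoid connection), but the argument is the same.
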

\begin{proof}
Given a solution $(\{ y_i\}_{i=1}^{k},\{ a_E\}_{e\in E}, \{ b_e\} _{e\in E})$ we are required to determine whether it is feasible and if not provide a constraint that is violated. We can easily determine whether all constraints of type (\ref{DUAL:constraint2}) are satisfied, and if not provide one that is violated, by a simple enumeration over all edges $e\in E$. The same can be done for constraints of type (\ref{DUAL:constraint1}) by enumerating over $i=1,\ldots,k$ and for each $i$ compute a maximum weight matching in $G$ equipped with $\{ b_e\}_{e\in E}$ as edge weights and check whether the maximum weight matching has value at most $y_i/\alpha_i^*$. If the maximum weight matching exceeds the target value return the constraint that corresponds to $i$ and the maximum weight matching.
\end{proof}
Solving the linear program will provide us with a fractional solution $\{x_{M,i}\}_{M \in \mathcal{M},i=1,\dots,k}$. For any $i$ we have $\sum_{M \in \mathcal{M}} x_{M,i} \leq 1$. This constraint of the LP creates a distribution over the matchings in time interval $i$. We create a solution to the program $\mathcal{P}$ from the fractional solution by a randomized rounding technique. We pick $M \in \mathcal{M}$ for the time interval $i$ with probability $x_{M,i}$. Note that with probability $1 - \sum_{M \in \mathcal{M}} x_{M,i}$ no matching will be chosen for this time interval. A formal description of this rounding method is provided in Algorithm \ref{alg:rounding}. Let $X_{M,i}$ denote the indicator random variable if matching $M$ is selected for the $i^{th}$ slot. Moreover, let $Y_{e,i}$ denote the random variable that edge $e$ is present in the matching chosen in the $i^{th}$ slot. We have $Y_{e,i}=\sum_{M\in \M: e\in M} X_{M,i}$ for each $e\in E$ and $i$ and $E[Y_{e,i}]=\sum_{M\in \M: e\in M} x_{M,i}$. Moreover, let $Z_e$ denote the random variable that denotes the data sent along edge $e$. Then we have
$Z_e= \min(D_{e}, \sum_{i=1}^k\alpha^*_i Y_{e,i})$. Observe that the random variables $\{Y_{e,i}\}_{i=1}^k$ are independent.

\begin{algorithm}[t]
\caption{Randomized Rounding}\label{alg:rounding}
\begin{algorithmic}[1]
\State \texttt{Input:} $(k,\{ \alpha_i^*\}_{i=1}^{k},\{ x_{M,i}\}_{M\in \mathcal{M},i=1,\ldots,k})$
\State \texttt{Output:} $\{ (M_i,\alpha_i^*)\}_{i=1}^{k}$
\For{\texttt{$i \gets 1,\ldots,k$}}
\State choose $M_i$ to be a random matching w.p. $x_{M,i}$ for the interval $i$
\EndFor\\
\Return $\{ (M_i,\alpha_i^*)\}_{i=1}^{k}$
\end{algorithmic}
\end{algorithm}

The following Lemma~\ref{lem:rounding} is implicit in Theorem 4 of Andelman and Mansour~\cite{andelman04auctions}.

\begin{lemma}\label{lem:rounding}
Let $Y_1,\ldots, Y_n$ be independent Bernoulli random variables and let $Z=\min(B,\sum_{i=1}^n b_i Y_i)$ for some non-negative reals $B, b_1,\ldots, b_n$. Then
   $$\mathbb{E}[Z] \geq \left(1-\frac{1}{e}\right)\min\left(B, \mathbb{E}\left[\sum_{i=1}^n b_i Y_{i}\right]\right).$$
\end{lemma}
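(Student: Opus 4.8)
The plan is to derive the bound from a single pointwise domination together with two elementary estimates for $1-e^{-x}$. Write $p_i=\Pr[Y_i=1]$, let $S=\sum_{i=1}^n b_iY_i$ and $\mu=\mathbb{E}[S]=\sum_{i=1}^n p_ib_i$. I first observe that the lemma is invoked with $b_i=\alpha_i^*$ and $B=D_e$, and that one should work in the regime $b_i\le B$ for all $i$, so that every ratio $b_i/B$ lies in $[0,1]$: if some $\alpha_i^*>D_e$ one simply uses the truncated coefficient $\min(\alpha_i^*,D_e)$ in constraint~\eqref{LP:constraint3}, which keeps $(\mathcal P)$ a valid relaxation and leaves the data actually routed along $e$ unchanged. (This normalization is genuinely needed — with a single $b_1\gg B$ and $p_1$ tiny the inequality fails.)

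The heart of the argument is the pointwise inequality
\[
\Bigl(1-\tfrac{S}{B}\Bigr)^{+}\;\le\;\prod_{i=1}^{n}\Bigl(1-\tfrac{b_i}{B}Y_i\Bigr),
\]
which holds because, on the event $\{Y_i=1 \iff i\in T\}$, the right side equals $\prod_{i\in T}(1-b_i/B)$, the left side equals $\bigl(1-\sum_{i\in T}b_i/B\bigr)^{+}$, and for any $\beta_i\in[0,1]$ one has $\prod_{i\in T}(1-\beta_i)\ge\max\{0,\,1-\sum_{i\in T}\beta_i\}$ (a two-line induction on $|T|$). Taking expectations and using independence of the $Y_i$, the expectation of the product equals $\prod_{i=1}^n(1-p_ib_i/B)\le e^{-\sum_i p_ib_i/B}=e^{-\mu/B}$, so
\[
\mathbb{E}\bigl[(B-S)^{+}\bigr]=B\,\mathbb{E}\Bigl[\bigl(1-\tfrac{S}{B}\bigr)^{+}\Bigr]\le B\,e^{-\mu/B},
\]
and since $\min(B,S)=B-(B-S)^{+}$ this yields $\mathbb{E}[\min(B,S)]\ge B\bigl(1-e^{-\mu/B}\bigr)$.

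It then remains to compare $B\bigl(1-e^{-\mu/B}\bigr)$ with $(1-\tfrac1e)\min(B,\mu)$, in two cases. If $\mu\ge B$, then $\min(B,\mu)=B$ and $1-e^{-\mu/B}\ge 1-e^{-1}$, so we are done. If $\mu\le B$, set $x=\mu/B\in[0,1]$; the function $x\mapsto 1-e^{-x}-(1-\tfrac1e)x$ is concave and vanishes at $x=0$ and $x=1$, hence is nonnegative on $[0,1]$, giving $1-e^{-\mu/B}\ge(1-\tfrac1e)\mu/B$ and therefore $\mathbb{E}[\min(B,S)]\ge B\bigl(1-e^{-\mu/B}\bigr)\ge(1-\tfrac1e)\mu=(1-\tfrac1e)\min(B,\mu)$. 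I expect the only real obstacle to be the normalization $b_i\le B$: it is essential, so part of the work is simply to set up $(\mathcal P)$ (via the truncated coefficients above) so that the hypothesis is automatic; everything else is a short chain of elementary inequalities with no induction beyond the two-liner for $\prod(1-\beta_i)\ge 1-\sum\beta_i$.
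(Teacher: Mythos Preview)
Your proof is correct. The paper does not actually prove this lemma; it merely states that the result is implicit in Theorem~4 of Andelman and Mansour~\cite{andelman04auctions}. Your argument---the pointwise bound $(1-S/B)^{+}\le\prod_{i}(1-(b_i/B)Y_i)$, followed by $\prod_i(1-p_ib_i/B)\le e^{-\mu/B}$ via independence and $1-x\le e^{-x}$, and then the concavity comparison of $1-e^{-x}$ with $(1-\tfrac1e)x$ on $[0,1]$---is precisely the standard route and is essentially what that reference does.

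You are also right to flag the missing hypothesis $b_i\le B$: the lemma as printed is false without it (your single-variable example with $b_1\gg B$ and $p_1$ tiny is decisive, giving $\mathbb{E}[Z]=p_1B$ while $(1-\tfrac1e)\min(B,\mu)=(1-\tfrac1e)p_1b_1$ can be made arbitrarily larger). Your fix---replacing $\alpha_i^*$ by $\min(\alpha_i^*,D_e)$ in constraint~\eqref{LP:constraint3}---is the correct and harmless repair for the application: one checks that $\min\bigl(D_e,\sum_i\alpha_i^*Y_{e,i}\bigr)=\min\bigl(D_e,\sum_i\min(\alpha_i^*,D_e)Y_{e,i}\bigr)$ pointwise, and that the tightened LP is still a relaxation of the integral optimum, so nothing is lost.
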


Applying the above lemma for each $e$ and random variables $\{Y_{e,i}\}_{i=1}^k$, we obtain that

\begin{eqnarray*}
\mathbb{E}[Z_e]&\geq&  \left(1-\frac1e\right) \min\left(D_e, \mathbb{E}\left[\sum_{i=1}^k \alpha^*_i Y_{e,i}\right]\right)=\left(1-\frac1e\right) \min\left(D_e, \sum_{i=1}^k \sum_{M\in \M: e\in M}\alpha^*_i x_{e,i}\right)\\
&\geq&\left(1-\frac1e\right)\cdot z_e.
\end{eqnarray*}

Now summing over all edges, Theorem~\ref{thm:constant} follows. We are now ready to conclude our discussion of the offline variant of the circuit switch scheduling problem and prove Theorem~\ref{thm:offline}.
\begin{proof}[Proof of Theorem \ref{thm:offline}]
Given $\epsilon > 0$, if $ \delta \leq (\frac{e}{2(e-1)}\epsilon)W$ then Theorem~\ref{thm:greedy} gives us a $(1 - \frac{1}{e} - \epsilon)$-approximation. Otherwise, $\frac{2(e-1)}{e}\frac{1}{\epsilon} > \frac{W}{\delta}$ implying that at most $\frac{2(e-1)}{e}\frac{1}{\epsilon}$ configurations can be scheduled. In this case, Theorem~\ref{thm:constant} will give a $(1 - \frac{1}{e} - \epsilon)$-approximation.
\end{proof}

\section{Online Circuit Switch Scheduling Problem}
In this section, we prove Theorem~\ref{thm:online}. Recall that in the online setting, we consider a discrete time model\footnote{We could also consider a continuous time model where data matrices can arrive at any time and the algorithm can choose a matching at any time instant with a switching time $\delta$ when no data is sent. Our results apply to this model as well. The discrete model makes the presentation of the results easier.} where an additional traffic matrix is revealed at every time $t=1,2,\ldots, T$. At every time step $t$, a new set of traffic demands arrives and adds to the remaining traffic that has not been sent so far. We assume that the data matrix arriving at each step is integral and thus can be modeled as a multigraph. We denote the incoming traffic matrices as multigraphs $\{E_{1}, E_{2}, \dots, E_T \}$ (instead of $D_{i}$'s to simplify and familiarize the notation) and thus union of any two such graphs is defined by adding the number of copies of edges in the two constituents. Before proving the general theorem, we first consider the case when there is no delay while switching matchings, i.e., $\delta=0$. Observe that in this case, the offline problem can be solved exactly and we show a $\frac12$-competitive algorithm for the online problem. The general reduction builds on this simple case along with the offline algorithm.

\subsection{Without Configuration Delay}
Observe that an online algorithm, in this case, will pick a set of matchings $\{M_{1}, M_{2}, \dots, M_T\}$, instead of a schedule, that covers the maximum number of edges. At each step $t$, the algorithm picks the maximum matching from the graph formed by the new edges that arrive, $E_{t}$, and the remaining edges in the graph from previous steps which we denote by $R_{t-1}$. The algorithm is formally given in Algorithm \ref{alg:onlinewdelay}. Here $\M$ denotes the set of all matchings on the complete bipartite graph with parts $A$ and $B$. The objective of Algorithm~\ref{alg:onlinewdelay} is $\sum_{t = 1}^{T} |M_{t}|$, where $|M_{t}|$ denotes the number of edges in the matching $M_{t}$. We denote the optimum solution by $\O=\{O_{1}, \dots, O_{T} \}$, We have the Theorem \ref{thm:onlinewdelay} for our approximation guarantee.

\begin{algorithm}
\caption{Online Greedy Algorithm without Delay}\label{alg:onlinewdelay}
\begin{algorithmic}[1]
\State \texttt{Input:} Bipartite multigraphs on $E_{1}, E_{2},\dots, E_T$ on $A\cup B$ where $E_t$ is disclosed at beginning of step $t$.
\State \texttt{Output:} $\{M_{1}, M_{2}, \dots, M_T\}$
\State $R_{0}, S \gets \emptyset$, $t \gets 1 $.
\For{$t \gets 1,2,\dots, T$}
\State $R_t'\gets R_{t-1}\cup E_t$.
\State $M_{t} \gets argmax_{M \in \mathcal{M}, M\subseteq R_{t}'} |M|$.
\State $S \gets S \cup \{M_{t}\}$, $R_{t} \gets R_t'\setminus \{M_{t}\}$, $t \gets t + 1$.
\EndFor\\
\Return $S$
\end{algorithmic}
\end{algorithm}

\begin{theorem}\label{thm:onlinewdelay}
Algorithm \ref{alg:onlinewdelay} is $\frac{1}{2}$-competitive for the online circuit switch scheduling problem without delays.
\end{theorem}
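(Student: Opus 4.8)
The plan is to compare the greedy online solution $S=\{M_1,\dots,M_T\}$ against the optimum $\O=\{O_1,\dots,O_T\}$ by a charging/exchange argument carried out step by step over time. The key structural observation is that at step $t$ the greedy algorithm picks a \emph{maximum} matching $M_t$ in the graph $R_t' = R_{t-1}\cup E_t$. Now consider $O_t$: every edge of $O_t$ is an edge that has arrived by step $t$, hence it lies in $E_1\cup\dots\cup E_t$. Each such edge has, by step $t$ in the greedy run, either already been sent by some $M_s$ with $s\le t$, or it is still present in $R_t'$. Let me split $O_t$ accordingly into $O_t^{\mathrm{old}}$ (edges already sent by greedy before or at step $t$) and $O_t^{\mathrm{new}} = O_t\setminus O_t^{\mathrm{old}}$. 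Since $O_t^{\mathrm{new}}\subseteq R_t'$ is itself a matching contained in $R_t'$, and $M_t$ is a maximum matching in $R_t'$, we get $|M_t|\ge |O_t^{\mathrm{new}}|$.

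Next I would sum this inequality over all $t$. This yields $\sum_{t=1}^{T}|M_t| \ge \sum_{t=1}^{T}|O_t^{\mathrm{new}}| = \sum_{t=1}^T |O_t| - \sum_{t=1}^{T}|O_t^{\mathrm{old}}|$. So it remains to bound $\sum_t |O_t^{\mathrm{old}}|$ by the greedy objective $\mathrm{ALG} = \sum_t |M_t|$. The point is that $O_t^{\mathrm{old}}$ consists of edges that greedy has already transmitted in some matching $M_s$, $s\le t$; moreover the matchings $O_1,\dots,O_T$ in the optimum schedule are edge-disjoint as multiset usage (each physical copy of an edge is sent at most once by the optimum). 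Therefore each edge-copy transmitted by greedy can be the ``witness'' for at most one $O_t^{\mathrm{old}}$ membership, giving $\sum_t |O_t^{\mathrm{old}}| \le \sum_s |M_s| = \mathrm{ALG}$. Combining, $\mathrm{ALG} \ge \mathrm{OPT} - \mathrm{ALG}$, i.e. $\mathrm{ALG}\ge \tfrac12\,\mathrm{OPT}$, which is exactly the claimed $\tfrac12$-competitiveness. (One should double check the multigraph bookkeeping: when an edge arrives in multiple copies across different $E_t$'s, ``already sent'' must be counted with multiplicity, but the disjointness of how both ALG and OPT consume copies makes the charging one-to-one in the multiset sense.)

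The main obstacle I anticipate is making the charging argument watertight in the multigraph setting, precisely because both the optimum and greedy may use an edge (or several parallel copies of it) at different times, and ``has greedy already sent this edge by time $t$'' needs to be interpreted per copy. The clean way to handle this is to fix, for the analysis only, an injection from the copies used by the optimum to the copies used by greedy: whenever an edge-copy appears in $O_t^{\mathrm{old}}$, it was sent by greedy at some earlier step $s\le t$; assign it to that occurrence, and check that distinct optimum occurrences get distinct greedy occurrences (which holds because within the optimum schedule each copy is used once, and a fixed greedy occurrence $M_s$ can only be the earliest-sending step for the copies it actually contains, each of which appears at most once in each $O_t$). With that injection in hand the inequality $\sum_t|O_t^{\mathrm{old}}|\le \mathrm{ALG}$ is immediate, and the rest is the one-line arithmetic above.
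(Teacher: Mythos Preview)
Your charging argument is correct and gives a valid proof of Theorem~\ref{thm:onlinewdelay}, but it takes a genuinely different route from the paper's. The paper proceeds by induction on $T$: it peels off the first step, builds the modified input sequence $\Gamma' = \{R_2', E_3,\dots,E_T\}$, observes that the algorithm's remaining steps coincide with its run on $\Gamma'$, and then lower-bounds $\mathrm{OPT}(\Gamma')$ by $\sum_{t\ge 2}|O_t| - |M_1|$ via the feasible candidate $\{O_2\setminus M_1,\dots,O_T\setminus M_1\}$ (Lemma~\ref{lem:main}). Your approach instead splits each $O_t$ into the part already consumed by greedy and the part still present in $R_t'$, uses maximality of $M_t$ for the latter, and charges the former injectively into greedy's output; no induction or auxiliary instance is needed. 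Your version is arguably the more elementary of the two. The paper's inductive/peeling framework, on the other hand, is exactly what is reused (with blocks of length $k\delta$ replacing single time steps) in the proof of Theorem~\ref{thm:online} for the delay case, so its extra structure pays off there. The multigraph bookkeeping you flag is indeed the only delicate point, and your per-copy injection handles it: since both the optimum and the greedy schedule consume each physical edge-copy at most once, an edge-copy in some $O_t^{\mathrm{old}}$ maps to the unique greedy step that sent it, and distinct optimum occurrences go to distinct greedy occurrences.
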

\begin{proof}
Let $\Gamma = \{E_{1}, \ldots, E_{T}\}$ denote the incoming edges for the first $T$ steps. We call this the input sequence for the first $T$ steps. We use induction on $T$ to prove the theorem. Specifically, we prove that for \emph{any} input sequence of edges for $T$ steps, $\Gamma = \{E_{1}, E_{2}, \dots, E_{T}\}$, we have
$$\sum_{t=1}^{T} |M_{t}| \geq \frac{1}{2}\sum_{t=1}^{T} |O_{t}|$$
For $T=1$, we know that the maximum matching has the biggest size of any matching in the graph. So, we have $|M_{1}| \geq |O_{1}|$ and thus the base case holds.

By the induction hypothesis, we have that for \emph{any} input sequence of $T-1$ steps, we have $\sum_{t=1}^{T-1} |M_{t}| \geq \frac{1}{2} \sum_{t=1}^{T-1} |O_{t}|$ where $\{M_t\}_{t=1}^{T-1}$ and $\{O_t\}_{t=1}^{T-1}$ are the output of the algorithm and the optimal solution, respectively.

Now, consider any input sequence $E_1,\ldots, E_T$. Recall, $R_1$ is the residual graph formed after first step of the algorithm, i.e. $R_1=E_1\setminus M_1$. At the next step, the algorithm will find the maximum matching in $R_2'=R_{1} \cup E_{2}$ as its edge set. We build a new sequence of $T-1$ inputs and apply induction to it.

Let $\Gamma^{\prime} =\{R_{2}', E_{3}, \dots, E_{T}\}$. Consider the optimum solution on this new input sequence. Let $\{M_{t}^{\prime}\}_{t=2}^T$ be the matchings that our algorithm picks given this new input sequence and $\{O_{t}^{\prime}\}_{t=2}^T$ the optimum matchings. Using the induction hypothesis we can write
$\sum_{t=2}^{T} |M^{\prime}_{t}| \geq \frac{1}{2} \sum_{t=2}^{T} |O^{\prime}_{t}|.$

First note that for $2\leq i \leq n, M_{i} = M_{i}^{\prime}$. This is true since $M_{i}$ and $M_{i}^{\prime}$ are the maximum matchings of the same graph as can be seen inductively. We now show the following lemma that relates the optimum solution of the new instance to the original instance.
\begin{lemma}\label{lem:main}
$\sum_{t=2}^{T} |O^{\prime}_{t}| \geq \sum_{t=2}^{T} |O_{t}| - |M_{1}|. $
\end{lemma}
\begin{proof}
The matchings $\{O_{2}\setminus M_{1}, O_{3}\setminus M_{1}, \dots, O_{T}\setminus M_{1} \}$ is a feasible output for the optimum solution on the $\Gamma^{\prime}$ sequence. Therefore, we have
$ \sum_{t=2}^{T} |O_{t}^{\prime}| \geq \sum_{t=2}^{T} |O_{t}| - |M_{1}| $
as required.
\end{proof}
Using the induction hypothesis and the lemma we can write
$$ \sum_{t=2}^{T} |M_{t}| \geq \frac{1}{2}\left(\sum_{t=2}^{T} |O_{t}| - |M_{1}|\right)$$
Adding the inequality $|M_{1}| \geq |O_{1}|$ to both sides, we obtain
$$ \sum_{t=1}^{T} |M_{t}| \geq \frac{1}{2}\left(\sum_{t=2}^{T} |O_{t}|\right) +\frac{1}{2}|M_1|\geq \frac{1}{2}\left(\sum_{t=2}^{T} |O_{t}|\right) + \frac12 |O_1|=\frac{1}{2}\left(\sum_{t=1}^{T} |O_{t}|\right) $$
and the induction step follows.\end{proof}

\subsection{With Configuration Delay}
In this section, we assume switching between the configurations causes a delay of $\delta \in \mathbb{N}$ steps during which no data is sent. We also assume that we have access to a $\beta$-approximation for the offline version of the problem. Note that we view the offline algorithm as a black-box. More formally, we assume we have an algorithm of the form Algorithm~\ref{alg:offlinebox}. To reiterate, $G$ is the given complete bipartite graph, $D$ is the traffic demand matrix, $\delta$ is the switching delay and $W$ is the size of the time window. Recall, that sending the configuration $(M, \alpha)$ means that for the next $\alpha$ steps we will only send data using matching $M$.

\begin{algorithm}
\caption{Offline Algorithm for Circuit Switch Scheduling}\label{alg:offlinebox}
\begin{algorithmic}[1]
\State \texttt{Input:} $G=\left( A, B, E \right), D, \delta,  W$
\State \texttt{Output:} $\S = \{(M_{1}, \alpha_{1}), \dots, (M_{j}, \alpha_{j})\}$
\end{algorithmic}
\end{algorithm}

Given a constant $k\geq 1$, the first step of the algorithm is to wait $k\delta$ steps for data to accumulate and then run the offline algorithm on the accumulated data for time window $W = k\delta$.
Let $\S_{1}$ be the output of the offline algorithm. We run this schedule from time $t=k\delta + 1$ to $t = 2k\delta$. Meanwhile, we collect the incoming data matrices in these times. Figure \ref{fig:basis} shows one step of the algorithm.  At the next step, we consider the total remaining data that includes data that has not been scheduled so far from previous schedule(s) and newly arrived data in previous $k\delta$ steps. We then run the offline algorithm on this data matrix to obtain a schedule for the next $k\delta$ steps. More generally, we continue this process for every block of $k\delta$ time steps. Algorithm \ref{alg:onlinedelay} is the formal description of the algorithm. Note that this description is written as an enumeration over blocks of size $k\delta$. Recall that $f(\S)$ denotes the amount of data sent by any schedule $\S$.

\begin{algorithm}
\caption{Online Greedy with Delay}\label{alg:onlinedelay}
\begin{algorithmic}[1]
\State \texttt{Input:}$\delta, k$ and data matrices $D_{1},D_{2},\dots,D_T$ on $A\times B$ where $D_i$ revealed at beginning of step $i$. Let $l=\lceil \frac{T}{k\delta}\rceil$.
\State \texttt{Output:}$\mathcal{S}=\S_{1} \cup \S_{2} \cup \ldots \cup \S_l$.
\State $\S \gets \emptyset$, $R_{0} \gets \emptyset$.
\For{ $r \gets 0, \dots, l-1$ }
\State $R_{r}' \gets R_{r} + \sum_{rk\delta + 1\leq j \leq \left(r+1\right)k\delta} D_{j}$.
\State $\S_{r} \gets OfflineAlgorithm\left(G, R_{r}', \delta, k\delta\right)$.
\State $R_{r+1} \gets R_{r}' - min\left(R_{r}', \sum_{\left(\alpha,M\right)\in \S_r} \alpha M\right)$,  $\S \gets {\S} \cup \S_{r}$.
\EndFor\\
\Return $\S$
\end{algorithmic}
\end{algorithm}
\begin{proof}[Proof of Theorem \ref{thm:online}] 
We use a coefficient $\gamma \leq \beta$ and optimize $\gamma$ in the end. We prove the theorem by induction on the number of the blocks, i.e., $l$ and will follow along the lines of proof of Theorem~\ref{thm:onlinewdelay}. As we did in the proof of Theorem \ref{thm:onlinewdelay}, we consider the incoming traffic as sequences. But in this case we define a sequence $\Gamma =\{I_{1}, I_{2}, \dots, I_{l}\}$, where $I_{i} = \bigcup_{j=\left(i-1\right)\left(k\delta\right)+1}^{i \left(k \delta \right) } D_{j}$ is the input of block $i$. For $l = 1$, let the optimum schedule be $\O$ and the algorithm's schedule be $\S$. Figure \ref{fig:basis} shows this setting.
\begin{figure}[h]
\centering
\includegraphics[width=0.35\textwidth]{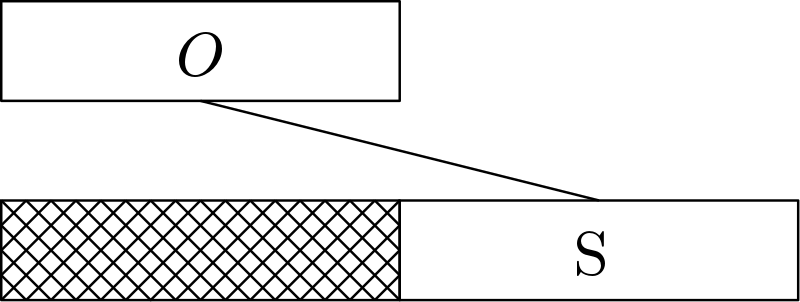}
\caption{Basis of the induction. The crossed out block is the waiting period of our algorithm}
\label{fig:basis}
\end{figure}
Using Lemma \ref{lem:reduce}, there exists a schedule $\Tilde{\O}$ with the property that $f\left(\Tilde{\O}\right) \geq \left(1 - \frac{2}{k}\right)f\left(\O\right)$. Since $\S$ is the output of our offline algorithm we can write $f\left(\S\right) \geq \beta f\left(\O^{\prime}\right) \geq \left(1 - \frac{2}{k}\right)\beta f\left(\O\right) \geq \left(1 - \frac{2}{k}\right)\gamma f\left(\O\right)$ and the basis of the induction is proven.

For $l = t$, again let $\O$ be the optimum schedule and $\S = S_{1} \cup S_{2} \dots\cup S_{t}$ be the output of our algorithm where each $S_{i}$ is the schedule on $i$th $k\delta$ block. Let $O_{1}$ be the optimum schedule for the first block and $S_{1}$ our algorithm's schedule on that block. Refer to Figure \ref{fig:step} for an illustration of this setting.

\begin{figure}[h]
\centering
\includegraphics[width=0.85\textwidth]{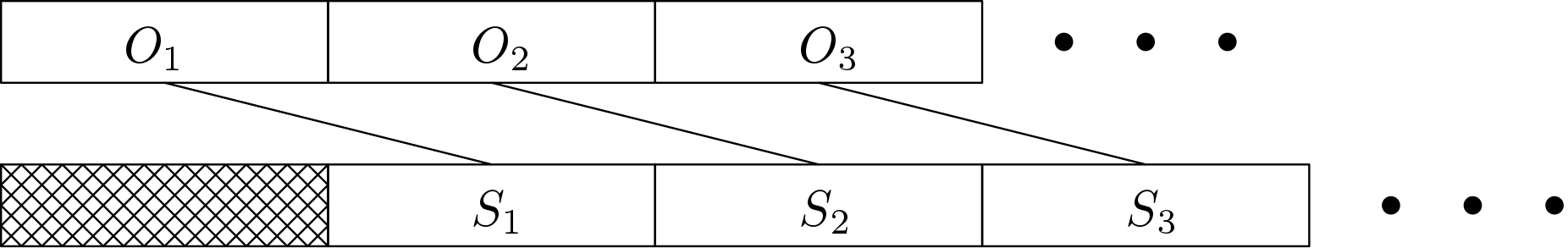}
\caption{Step of the induction.}
\label{fig:step}
\end{figure}

Consider the new input sequence $\Gamma^{\prime} = \{R_{1}' \cup I_{2}, I_{3}, \dots, I_{t}\}$. Let the optimum schedule on the new input sequence be $\O^{\prime}$ and the algorithm's schedule be $\S^{\prime} = S_{1}^{\prime} \cup \dots \cup S_{l}^{\prime}$. From the induction hypothesis, we have
$$f\left(\S^{\prime}\right) \geq \left(1 - \frac{2}{k}\right)\gamma f\left(\O^{\prime}\right).$$
Note that $S_{i} = S^{\prime}_{i-1}$ for $i \geq 2$ and thus $$f\left(\S^{\prime}\right) = f\left(\S\setminus S_{1}\right) = f\left(\S\right) - f\left(S_{1}\right).$$ As in the proof of Lemma~\ref{lem:main}, a candidate schedule for the new instance is to consider $\O\setminus O_1$ and  ignore the data sent by the algorithm in the schedule $S_1$ if it appears in any of the optimal matchings. Thus we obtain that  $$f\left(\O^{\prime}\right) \geq f\left(\O\setminus O_{1}\right) - f\left(S_{1}\right) = f\left(\O\right) - f\left(O_{1}\right) - f\left(S_{1}\right).$$ For $O_{1}$ based on our basis argument we can find $S_{1}$ such that $f\left(S_{1}\right) \geq \left(1 - \frac{2}{k}\right)\beta f\left(O_{1}\right) $. To sum up, we have the two following inequalities:
$$ f\left(\S\right) - f\left(S_{1}\right) \geq \left(1-\frac{2}{k}\right)\gamma\left(\left(f\left(\O\right) - f\left(O_{1}\right)\right) - f\left(S_{1}\right)\right) $$
and
$$ f\left(S_{1}\right) \geq \left(1 - \frac{2}{k}\right)\beta f\left(O_{1}\right). $$
Rewriting the first inequality, we have
$$f\left(\S\right) - \left(1 - \left(1 - \frac{2}{k}\right)\gamma\right)f\left(S_{1}\right) \geq \left(1 - \frac{2}{k}\right)\gamma\left(f\left(\O\right) - f\left(O_{1}\right)\right) $$
Adding the $\left(1 - \left( 1 - \frac{2}{k}\right)\gamma\right)$ times the second inequality
$$f\left(\S\right) \geq \left(1 - \frac{2}{k}\right)\gamma f\left(\O\right) - \left( 1 - \frac{2}{k}\right) \left(\gamma - \beta \left(1 - \left( 1 - \frac{2}{k}\right)\gamma \right)\right)f(O_{1}) $$
Optimizing the $\gamma$ we get $\gamma = \frac{\beta}{\left( 1 + \left(1 - \frac{2}{k}\right)\beta \right)}$ and thus proving the theorem.

\end{proof}

\bibliographystyle{apa}
\bibliography{references}

\begin{thebibliography}{}

\bibitem[\protect\astroncite{Andelman and Mansour}{2004}]{andelman04auctions}
Andelman, N. and Mansour, Y. (2004).
\newblock Auctions with budget constraints.
\newblock In {\em Scandinavian Workshop on Algorithm Theory}, pages 26--38.

\bibitem[\protect\astroncite{Badanidiyuru and
  Vondr{\'a}k}{2014}]{badanidiyuru2014fast}
Badanidiyuru, A. and Vondr{\'a}k, J. (2014).
\newblock Fast algorithms for maximizing submodular functions.
\newblock In {\em Proceedings of the twenty-fifth annual ACM-SIAM symposium on
  Discrete algorithms}, pages 1497--1514.

\bibitem[\protect\astroncite{Barman}{2015}]{theory:approx}
Barman, S. (2015).
\newblock Approximating nash equilibria and dense bipartite subgraphs via an
  approximate version of caratheodory's theorem.
\newblock In {\em Proceedings of the Forty-seventh Annual ACM Symposium on
  Theory of Computing}, STOC '15, pages 361--369.

\bibitem[\protect\astroncite{Celik et~al.}{2016}]{celik2016dynamic}
Celik, G., Borst, S.~C., Whiting, P.~A., and Modiano, E. (2016).
\newblock Dynamic scheduling with reconfiguration delays.
\newblock {\em Queueing Systems}, 83(1-2):87--129.

\bibitem[\protect\astroncite{Chang et~al.}{1999}]{chang1999service}
Chang, C.-S., Chen, W.-J., and Huang, H.-Y. (1999).
\newblock On service guarantees for input-buffered crossbar switches: a
  capacity decomposition approach by birkhoff and von neumann.
\newblock In {\em 1999 Seventh International Workshop on Quality of Service
  (IWQoS'99)}, pages 79--86.

\bibitem[\protect\astroncite{{Chen} et~al.}{2014}]{CSSRXZWC14}
{Chen}, K., {Singla}, A., {Singh}, A., {Ramachandran}, K., {Xu}, L., {Zhang},
  Y., {Wen}, X., and {Chen}, Y. (2014).
\newblock Osa: An optical switching architecture for data center networks with
  unprecedented flexibility.
\newblock {\em IEEE/ACM Transactions on Networking}, 22(2):498--511.

\bibitem[\protect\astroncite{Dasylva and Srikant}{1999}]{optimumexists}
Dasylva, A. and Srikant, R. (1999).
\newblock Optimal wdm schedules for optical star networks.
\newblock {\em IEEE/ACM Transactions on Networking}, 7(3):446--456.

\bibitem[\protect\astroncite{Dufossé et~al.}{2018}]{dufosse2017further}
Dufossé, F., Kaya, K., Panagiotas, I., and Uçar, B. (2018).
\newblock Further notes on birkhoff–von neumann decomposition of doubly
  stochastic matrices.
\newblock {\em Linear Algebra and its Applications}, 554:68 -- 78.

\bibitem[\protect\astroncite{Ene and Nguyen}{2017}]{Alina2017}
Ene, A. and Nguyen, H.~L. (2017).
\newblock A nearly-linear time algorithm for submodular maximization with a
  knapsack constraint.
\newblock {\em CoRR}, abs/1709.09767.

\bibitem[\protect\astroncite{Farrington et~al.}{2010}]{tech:optical2}
Farrington, N., Porter, G., Radhakrishnan, S., Bazzaz, H.~H., Subramanya, V.,
  Fainman, Y., Papen, G., and Vahdat, A. (2010).
\newblock Helios: a hybrid electrical/optical switch architecture for modular
  data centers.
\newblock {\em ACM SIGCOMM Computer Communication Review}, 40(4):339--350.

\bibitem[\protect\astroncite{Fu et~al.}{2013}]{offline:withdelay1}
Fu, S., Wu, B., Jiang, X., Pattavina, A., Zhang, L., and Xu, S. (2013).
\newblock Cost and delay tradeoff in three-stage switch architecture for data
  center networks.
\newblock In {\em 2013 IEEE 14th International Conference on High Performance
  Switching and Routing (HPSR)}, pages 56--61.

\bibitem[\protect\astroncite{Georgiadis et~al.}{2006}]{georgiadis2006resource}
Georgiadis, L., Neely, M.~J., Tassiulas, L., et~al. (2006).
\newblock Resource allocation and cross-layer control in wireless networks.
\newblock {\em Foundations and Trends{\textregistered} in Networking},
  1(1):1--144.

\bibitem[\protect\astroncite{Hamedazimi et~al.}{2014}]{tech:wireless2}
Hamedazimi, N., Qazi, Z., Gupta, H., Sekar, V., Das, S.~R., Longtin, J.~P.,
  Shah, H., and Tanwer, A. (2014).
\newblock Firefly: A reconfigurable wireless data center fabric using
  free-space optics.
\newblock In {\em ACM SIGCOMM Computer Communication Review}, volume~44, pages
  319--330.

\bibitem[\protect\astroncite{Inukai}{1979}]{offline:zerodelay}
Inukai, T. (1979).
\newblock An efficient ss/tdma time slot assignment algorithm.
\newblock {\em IEEE Transactions on Communications}, 27(10):1449--1455.

\bibitem[\protect\astroncite{kandula et~al.}{2009}]{tech:wireless1}
kandula, s., Padhye, J., and bahl, v. (2009).
\newblock Flyways to decongest data center networks.
\newblock {\em Proc. of Hot Nets}.

\bibitem[\protect\astroncite{Khuller et~al.}{1999}]{khuller1999budgeted}
Khuller, S., Moss, A., and Naor, J.~S. (1999).
\newblock The budgeted maximum coverage problem.
\newblock {\em Information processing letters}, 70(1):39--45.

\bibitem[\protect\astroncite{Kulkarni et~al.}{2017}]{kulkarni2017minimum}
Kulkarni, J., Lee, E., and Singh, M. (2017).
\newblock Minimum birkhoff-von neumann decomposition.
\newblock In {\em International Conference on Integer Programming and
  Combinatorial Optimization}, pages 343--354.

\bibitem[\protect\astroncite{Li et~al.}{2017}]{indirect:1}
Li, C., Mukerjee, M.~K., Andersen, D.~G., Seshan, S., Kaminsky, M., Porter, G.,
  and Snoeren, A.~C. (2017).
\newblock Using indirect routing to recover from network traffic scheduling
  estimation error.
\newblock In {\em Proceedings of the Symposium on Architectures for Networking
  and Communications Systems}, ANCS '17, pages 13--24.

\bibitem[\protect\astroncite{Li and Hamdi}{2003}]{theory:NP}
Li, X. and Hamdi, M. (2003).
\newblock On scheduling optical packet switches with reconfiguration delay.
\newblock {\em IEEE Journal on Selected Areas in Communications},
  21(7):1156--1164.

\bibitem[\protect\astroncite{Liu et~al.}{2015}]{solstice}
Liu, H., Mukerjee, M.~K., Li, C., Feltman, N., Papen, G., Savage, S., Seshan,
  S., Voelker, G.~M., Andersen, D.~G., Kaminsky, M., et~al. (2015).
\newblock Scheduling techniques for hybrid circuit/packet networks.
\newblock In {\em Proceedings of the 11th ACM Conference on Emerging Networking
  Experiments and Technologies}, page~41.

\bibitem[\protect\astroncite{Liu et~al.}{2017}]{lance}
Liu, L., Gong, L., Yang, S., Xu, J., and Fortnow, L. (2017).
\newblock Better algorithms for hybrid circuit and packet switching in data
  centers.
\newblock {\em arXiv preprint arXiv:1712.06634}.

\bibitem[\protect\astroncite{Mirrokni et~al.}{2017}]{theory:mirrokni}
Mirrokni, V., Leme, R.~P., Vladu, A., and wai Wong, S.~C. (2017).
\newblock Tight bounds for approximate {C}arath{\'e}odory and beyond.
\newblock In {\em Proceedings of the 34th International Conference on Machine
  Learning}, volume~70, pages 2440--2448.

\bibitem[\protect\astroncite{Sviridenko}{2004}]{sviridenko2004note}
Sviridenko, M. (2004).
\newblock A note on maximizing a submodular set function subject to a knapsack
  constraint.
\newblock {\em Operations Research Letters}, 32(1):41--43.

\bibitem[\protect\astroncite{Towles and Dally}{2003}]{offline:infdelay}
Towles, B. and Dally, W.~J. (2003).
\newblock Guaranteed scheduling for switches with configuration overhead.
\newblock {\em IEEE/ACM Transactions on Networking}, 11(5):835--847.

\bibitem[\protect\astroncite{Vargaftik et~al.}{2016}]{composite}
Vargaftik, S., Barabash, K., Ben-Itzhak, Y., Biran, O., Keslassy, I., Lorenz,
  D., and Orda, A. (2016).
\newblock Composite-path switching.
\newblock In {\em Proceedings of the 12th International on Conference on
  emerging Networking Experiments and Technologies}, pages 329--343.

\bibitem[\protect\astroncite{Venkatakrishnan et~al.}{2018}]{bojjacostly}
Venkatakrishnan, S.~B., Alizadeh, M., and Viswanath, P. (2018).
\newblock Costly circuits, submodular schedules and approximate
  carath{\'e}odory theorems.
\newblock {\em Queueing Systems}, pages 1--37.

\bibitem[\protect\astroncite{Wang et~al.}{2010}]{tech:optical3}
Wang, G., Andersen, D.~G., Kaminsky, M., Papagiannaki, K., Ng, T., Kozuch, M.,
  and Ryan, M. (2010).
\newblock c-through: Part-time optics in data centers.
\newblock In {\em ACM SIGCOMM Computer Communication Review}, volume~40, pages
  327--338.

\bibitem[\protect\astroncite{Zhou et~al.}{2012}]{tech:wireless3}
Zhou, X., Zhang, Z., Zhu, Y., Li, Y., Kumar, S., Vahdat, A., Zhao, B.~Y., and
  Zheng, H. (2012).
\newblock Mirror mirror on the ceiling: Flexible wireless links for data
  centers.
\newblock {\em ACM SIGCOMM Computer Communication Review}, 42(4):443--454.

\end{thebibliography}


\section{Missing Proofs}\label{apx:MissingProofs}
We recall Lemma~\ref{lem:beta}.
\begin{lemmanonumber}
If $\O'$ is the optimum schedule on time window $W - \delta$, then $$f(\S_r)\geq (1-\frac1e) f(\O').$$
\end{lemmanonumber}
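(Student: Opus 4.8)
The plan is to run the textbook greedy analysis for maximizing a monotone submodular function under a knapsack constraint — here the ``cost'' of a configuration $(M,\alpha)$ is its total time $\alpha+\delta$ and the budget is $W-\delta$ — while taking extra care of the truncated last configuration, which is exactly where the reduction to window $W-\delta$ (Lemma~\ref{lem:reduce}) and the precise choice of $\beta_r$ in Steps (11)--(13) of Algorithm~\ref{alg:IGA} pay off. Write $\S_i=\{(M_1,\alpha_1),\dots,(M_i,\alpha_i)\}$ for the first $i$ configurations chosen by the greedy, with $R_r$ the residual demand just before the $r$-th is picked; I may assume $W>\delta$ and that the greedy does not exhaust $D$ (else $f(\S_r)=\|D\|_1\ge f(\O')$). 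First I would establish the per-step progress inequality. By monotonicity and submodularity, for every $i$ we have $f(\O')-f(\S_i)\le f(\O'\cup\S_i)-f(\S_i)\le\sum_{(M,\alpha)\in\O'}f_{\S_i}((M,\alpha))$. Since $\O'$ is feasible on window $W-\delta$, i.e. $\sum_{(M,\alpha)\in\O'}(\alpha+\delta)\le W-\delta$, an averaging argument produces a configuration $(M',\alpha')\in\O'$ with $f_{\S_i}((M',\alpha'))/(\alpha'+\delta)\ge(f(\O')-f(\S_i))/(W-\delta)$. As the greedy rule \eqref{eq:greedy} maximizes this ratio over \emph{all} of $\mathcal{M}\times\mathbb{R}_+$,
$$f(\S_{i+1})-f(\S_i)\ \ge\ (\alpha_{i+1}+\delta)\cdot\frac{f(\O')-f(\S_i)}{W-\delta},$$
equivalently $f(\O')-f(\S_{i+1})\le\bigl(1-\tfrac{\alpha_{i+1}+\delta}{W-\delta}\bigr)\bigl(f(\O')-f(\S_i)\bigr)$.

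Unrolling this for $i=0,\dots,r-2$ and using $1-x\le e^{-x}$ gives
$$f(\O')-f(\S_{r-1})\ \le\ f(\O')\prod_{i=1}^{r-1}\Bigl(1-\tfrac{\alpha_i+\delta}{W-\delta}\Bigr)\ \le\ f(\O')\exp\!\Bigl(-\tfrac{1}{W-\delta}\textstyle\sum_{i=1}^{r-1}(\alpha_i+\delta)\Bigr).$$
(If some $f(\S_j)\ge f(\O')$ we are done, since the returned schedule contains $\S_{r-1}\supseteq\S_j$; so assume every $f(\O')-f(\S_j)$ is positive, which also forces each factor above to be positive.) It remains to handle the truncated last configuration. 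When $\beta_r\ge0$ the returned schedule is $\S_{r-1}\cup\{(M_r,\beta_r)\}$, and the marginal value of adding a duration-$\alpha$ copy of $M_r$ on top of $\S_{r-1}$ is $h(\alpha):=f_{\S_{r-1}}((M_r,\alpha))=\sum_{e\in M_r}\min(R_{r,e},\alpha)$, a concave function with $h(0)=0$, so $h(\beta_r)\ge\tfrac{\beta_r}{\alpha_r}h(\alpha_r)$. Combining with the greedy inequality $h(\alpha_r)\ge(\alpha_r+\delta)\tfrac{f(\O')-f(\S_{r-1})}{W-\delta}$ and $\tfrac{\alpha_r+\delta}{\alpha_r}\ge1$ yields $h(\beta_r)\ge\tfrac{\beta_r}{W-\delta}\bigl(f(\O')-f(\S_{r-1})\bigr)$, hence
$$f(\O')-f(\S_r)\ \le\ \Bigl(1-\tfrac{\beta_r}{W-\delta}\Bigr)\bigl(f(\O')-f(\S_{r-1})\bigr)\ \le\ f(\O')\exp\!\Bigl(-\tfrac{\beta_r+\sum_{i=1}^{r-1}(\alpha_i+\delta)}{W-\delta}\Bigr).$$
By the choice $\beta_r=W-\delta-\sum_{i=1}^{r-1}(\alpha_i+\delta)$ the exponent is exactly $-1$, giving $f(\S_r)\ge(1-\tfrac1e)f(\O')$. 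If $\beta_r<0$, then $\sum_{i=1}^{r-1}(\alpha_i+\delta)>W-\delta$ and the returned schedule is $\S_{r-1}$, so the bound two displays above is already at most $f(\O')/e$.

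The main obstacle — and the reason the lemma is stated against the window $W-\delta$ rather than $W$ — is precisely this last-step bookkeeping: the naive ``stop once the budget is met'' argument only bounds the untruncated, infeasible schedule, and one must show that truncating the last configuration loses nothing the analysis has not already paid for. Concavity of $h$ together with the exact value of $\beta_r$ is what makes the accumulated exponents telescope to exactly $-1$. Everything else (the submodular marginal inequality, the averaging step, the fact that the greedy argmax dominates every configuration of $\O'$, and the degenerate cases) is routine.
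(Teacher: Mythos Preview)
Your proof is correct and follows essentially the same route as the paper's: the standard greedy analysis for monotone submodular maximization under a knapsack constraint (cost $\alpha+\delta$, budget $W-\delta$), with the last truncated configuration handled separately via concavity of $\alpha\mapsto f_{\S_{r-1}}((M_r,\alpha))$. The only notable difference is in that last step: the paper bounds $f_{\S_{r-1}}((M_r,\beta_r))\ge\frac{\beta_r+\delta}{\alpha_r+\delta}\,f_{\S_{r-1}}((M_r,\alpha_r))$, whereas you use the weaker (but cleanly justified) concavity bound $h(\beta_r)\ge\frac{\beta_r}{\alpha_r}h(\alpha_r)$ and observe that $\beta_r+\sum_{i<r}(\alpha_i+\delta)=W-\delta$ exactly, which is already enough for the exponent to reach~$-1$.
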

\begin{proof}
Let $\S'_{r}= \{\left(M_{1}, \alpha_{1}\right), \dots, \left( M_{r}, \alpha_{r}\right)\}$ be the set of configurations picked by the greedy algorithm before the update steps (11)-(12) in which $\alpha_r$ is reduced to $\beta_r:=W-\delta-\sum_{j=1}^{r-1}(\alpha_j+\delta)$ to obtain schedule $\S_r$. Note that $\beta_r$ could be negative, however, for now assume $\beta_r \geq 0$. For ease of notation we also define $\beta_i=\alpha_i$ for each $1\leq i\leq r-1$. Thus $\S_{r}= \{\left(M_{1}, \beta_{1}\right), \dots, \left( M_{r}, \beta_{r}\right)\}$. We also let $\S_i$ to be the scheduled formed by picking the first $i$ configurations in $\S_r$. We now show the following claim.

\begin{claim}\label{cl:subineq}
For any configuration $(M_i,\beta_i)$ picked by the greedy algorithm in schedule $\S_r$ at any $1\leq i\leq r$, we have
$$f_{\S_{i-1}}\left(\left(M_{i}, \beta_{i}\right)\right) \geq \frac{\beta_{i} + \delta}{W-\delta}\left( f\left(\O'\right) - f\left(\S_{i-1}\right)\right).
$$
\end{claim}
\begin{proof}
First let us concentrate on the case when $i<r$. Then $\beta_i=\alpha_i$. Note that since $M_{i}$ is a matching that maximizes $\frac{||\min\left(R_{i}, \alpha_{i}M_{i}\right)||}{\alpha_{i} + \delta}$, for any other $M \in \mathcal{M}\setminus \S_{i-1}$ and any $\alpha \in \mathbb{R}_{+}$ we can write
$$ \frac{f_{\S_{i-1}}\left(\left(M,\alpha\right)\right)}{\alpha + \delta} \leq \frac{f_{\S_{i-1}}\left(\left( M_{i}, \alpha_{i}\right)\right)}{\alpha_{i} + \delta}$$
or equivalently, for each $1\leq i\leq r$ and configuration $(M,\alpha)$, we have
\begin{equation}\label{first_main}
f_{\S_{i-1}}\left(\left(M, \alpha\right)\right) \leq \frac{\alpha + \delta}{\alpha_{i} + \delta} f_{\S_{i-1}}\left(\left(M_{i}, \alpha_{i}\right)\right).
\end{equation}

For any $1 \leq i \leq r$, consider the following
\begin{equation}\label{eq:submodular}
 f\left(\O'\right) - f\left(\S_{i-1}\right) \leq f\left(\O' \cup \S_{i-1}\right) - f\left(\S_{i-1}\right) = f_{\S_{i-1}}\left(\O'\right) \leq \sum_{\left(M, \alpha\right)\in \O'\setminus \S_{i-1}} f_{\S_{i-1}}\left(\left(M, \alpha\right)\right)
\end{equation}
The last inequality comes from the submodularity of the function. Summing Inequality~\eqref{first_main} over all configurations in $\O'\setminus \S_{i-1}$ and using that the $\O'$ has a time window $W-\delta$, we obtain that
$$\sum_{\left(M, \alpha\right)\in \O'\setminus \S_{i-1}} f_{\S_{i-1}}\left(\left(M,\alpha\right)\right) \leq \frac{W-\delta}{\alpha_{i} + \delta} f_{\S_{i-1}}\left(\left( M_{i}, \alpha_{i}\right)\right). $$
Combining the above inequality with Inequality~\eqref{eq:submodular}, we obtain
\begin{equation}\label{eqn:gain}
f_{\S_{i-1}}\left(\left(M_{i}, \alpha_{i}\right)\right) \geq \frac{\alpha_{i} + \delta}{W-\delta}\left( f\left(\O'\right) - f\left(\S_{i-1}\right)\right).
\end{equation}

Thus if $i<r$, the claim follows since we have $\beta_i=\alpha_i$. 
When $i=r$, first observe that since the data sent along a single matching is a concave function of the time it is used in a configuration, we have that
\begin{eqnarray*}
f_{\S_{r-1}}\left(\left(M_{r}, \beta_{r}\right)\right) &\geq& \frac{\beta_{r} + \delta}{\alpha_r+\delta}f_{\S_{r-1}}\left(\left(M_{r}, \alpha_{r}\right)\right)\\
& \geq& \frac{\beta_{r} + \delta}{\alpha_r+\delta}\frac{\alpha_{r} + \delta}{W-\delta}\left( f\left(\O'\right) - f\left(\S_{r-1}\right)\right)\geq \frac{\beta_{r} + \delta}{W-\delta}\left( f\left(\O'\right) - f\left(\S_{r-1}\right)\right)
\end{eqnarray*}
This completes the proof of the claim.
\end{proof}

First note we can write the following equality:
$$ f\left(\O\right) - f\left(\S_{r}\right) = f\left(\O\right) - f\left(\S_{r-1}\right) - f_{\S_{r-1}}\left(\left(\M_{r},\beta_{r}\right)\right)$$
We will now derive the approximation factor. We have that
$$ f\left(\O\right) - f\left(\S_{r-1}\right) - f_{\S_{r-1}}\left(\left(\M_{r},\beta_{r}\right)\right) \leq f\left(\O\right) - f\left(\S_{r-1}\right) - \frac{\beta_{r} + \delta}{W - \delta}\left( f\left(\O\right) - f\left(\S_{r-1}\right)\right)   $$
This will result in the following inequality.
$$ f\left(\O\right) - f\left(\S_{r}\right) \leq \left(f\left(\O\right) -f\left(\S_{r-1}\right)\right)\left(1-\frac{\beta_{r} + \delta}{W-\delta}\right) $$
Continuing for the remaining $r-1$ steps we will have:
$$ f\left(\O\right) - f\left(\S_{r}\right) \leq \left(f\left(\O\right) -f\left(\S_{0}\right)\right) \Pi_{i=1}^{r}\left(1 - \frac{\beta_{i} + \delta}{W-\delta}\right) $$
Now using $1-x\leq e^{-x}$, we can write:
$$ f\left(\O\right) - f\left(\S_{r}\right) \leq f\left(\O\right)e^{-\sum_{i=1}^{r}\left(\frac{\beta_{i} + \delta}{W-\delta}\right)} $$
Since $\sum_{i=1}^{r} (\beta_{i} + \delta) \geq W - \delta$,
$$f\left(\S_{r}\right) \geq \left(1-\frac{1}{e}\right)f\left(\O\right)$$
Thus concluding the theorem for $\beta_r \geq 0$. Note that if $\beta_r < 0$, the whole argument of this section still holds without considering $\beta_r$ and the last configuration. This is because $\sum_{i=1}^{r-1} (\beta_{i} + \delta) \geq W - \delta$ holds without the last configuration if $\beta_r < 0$. Notice that in this case the algorithm will drop the last configuration.
\end{proof}

\section{Bi-Criteria for Online Variant}\label{apx:BicriteriaOnline}
We present an example that shows that a bi-criteria approximation is needed in the online variant of the circuit switch scheduling problem.
Given $\delta$ choose any time window $W$ such that $W\geq \delta+1$.
The input demand matrices are all zeros until the last time step $W$ in which the adversary injects a demand matrix $D$ that corresponds to a specific matching $M$ with demands of $1$ on all edges of $M$ and a demand of $0$ for all edges not in $M$.

The optimal solution knows $M$ in advance, and since $W\geq \delta +1$, it can spend $\delta$ time steps to switch to $M$ and fully satisfy $D$ in the last time step.
If we assume the online algorithm is deterministic, then let $M_{\text{alg}}$ be the matching that the online algorithm is configured to at the beginning of time step $W$ (if at all).
The adversary can choose a matching $M$ that is disjoint from $M_{\text{alg}}$.
Thus, no matter if the online algorithm changes the matching $M_{\text{alg}}$ or not, it cannot transmit even a single unit of demand.
If we assume the online algorithm is random, the adversary can choose a random uniform matching $M$.
Thus, no matter which matching the online algorithm chose, it transmits a single unit of demand in expectation whereas the optimal solution can satisfy all $n$ units of demand.
Hence, we conclude that without a bi-criteria guarantee any online algorithm cannot achieve any non-negligible competitive ratio.

\end{document}